\theoremstyle{plain}
\newtheorem{theorem}{Theorem}
\theoremstyle{plain}
\theoremstyle{plain}
\newtheorem{corollary}{Corollary}
\theoremstyle{plain}
\theoremstyle{plain}
\def\BibTeX{{\rm B\kern-.05em{\sc i\kern-.025em b}\kern-.08em
    T\kern-.1667em\lower.7ex\hbox{E}\kern-.125emX}}
\begin{document}
\receiveddate{XX Month, XXXX}
\reviseddate{XX Month, XXXX}
\accepteddate{XX Month, XXXX}
\publisheddate{XX Month, XXXX}
\currentdate{XX Month, XXXX}
\doiinfo{OJCOMS.2024.011100}

\title{A Superdirective Beamforming Approach with Impedance Coupling  and Field Coupling for Compact Antenna Arrays}

\author{{L}iangcheng Han \IEEEmembership{(Graduate Student Member, IEEE)}, Haifan Yin \IEEEmembership{(Senior Member, IEEE)}, Mengying Gao, Jingcheng Xie \IEEEmembership{(Graduate Student Member, IEEE)}}
\affil{School of Electronic Information and Communications, Huazhong University of Science and Technology, Wuhan 430074, China}

\corresp{CORRESPONDING AUTHOR: Haifan Yin (e-mail: yin@hust.edu.cn).}
\authornote{This article was presented in part\cite{DBLP:conf/icc/HanYM22}
at the 56th IEEE International Conference on Communications (IEEE ICC
2022). This work was supported by the National Natural Science Foundation of China under Grant 62071191.}

\begin{abstract}
In most multiple-input multiple-output (MIMO) communication systems, antennas are spaced at least half a wavelength apart to reduce mutual coupling. In this configuration, the maximum array gain is equal to the number of antennas. However, when the antenna spacing is significantly reduced, the array gain of a compact array can become proportional to the square of the number of antennas, greatly exceeding that of traditional MIMO systems. Achieving this "superdirectivity" requires complex calculations of the excitation coefficients (beamforming vector), which is a challenging task. In this paper, we address this problem with a novel double coupling-based superdirective beamforming method. In particular, we categorize the antenna coupling effects to impedance coupling and field coupling. By characterizing  these two coupling in model,  we derive the beamforming vector for superdirective arrays. We prove that the field coupling matrix has the unique solution for an antenna array, and itself has the ability to fully characterize the distorted coupling field. Based on this proven theorem, we propose a method that accurately calculates the coupling matrix using only a number of angle sampling points on the order of the number of antennas. Moreover, a prototype of an independently-controlled superdirective antenna array is developed. Full-wave electromagnetic simulations and real-world experiments validate the effectiveness of our proposed approaches, and superdirectivity is achieved in reality by a compact array with 4 and 8 dipole antennas. 
\end{abstract}

\begin{IEEEkeywords}
superdirectivity, beamforming, coupling matrix, compact array, experimental validation.
\end{IEEEkeywords}
\maketitle
\section{Introduction}
As a key technology of fifth-generation (5G) mobile communication systems, massive MIMO is being deployed globally alongside 5G networks \cite{marzetta2016fundamentals}. According to \cite{marzetta2010noncooperative}, the spectral efficiency significantly increases as the number of base station antennas approaches infinity. However, in practical applications, antenna spacing is typically no less than half a wavelength. This spacing minimizes the coupling effect between antennas, simplifies system design, and restricts the number of antennas accommodated within a given aperture. Recently, to further enhance spectral efficiency, researchers have explored the deployment of ultra-dense antenna arrays at base stations \cite{bjornson2019massive}\cite{pizzo2022fourier} \cite{9110848}, which necessitates consideration of mutual coupling effects. Notably, coupling effects are not entirely detrimental. In compact arrays with $M$ antennas spaced much closer than half a wavelength, strong coupling enables superdirectivity, potentially increasing beamforming gain to $M^2$, compared to the $M$ gain in traditional MIMO theory.

In general, electromagnetic waves emitted by antennas can be divided into propagating waves and evanescent waves \cite{clemmow2013plane}. The coupling between antennas results from the combined effects of these two types of waves. When the antenna spacing is significantly less than half a wavelength, the influence of evanescent waves becomes more pronounced than that of propagating waves, leading to very strong coupling between antennas \cite{hansen2014exact}\cite{altshuler2005monopole}. Traditional communication research often overlooks the role of coupling in an antenna array \cite{8861014}. Theoretically, if the amplitude and phase of each antenna excitation are precisely controlled, this strong coupling can enable superdirectivity \cite{bjornson2019massive}\cite{bloch1953new}. According to conventional communication theory, array gain is proportional to the number of antennas \(M\). However, Uzkov demonstrated that the directivity of an isotropic linear array with \(M\) antennas can reach \(M^2\) when the spacing between antennas approaches zero \cite{uzkov1946approach}. If a base station is equipped with a large number of antennas, the improvement in array gain is even more significant \cite{DBLP:conf/acssc/Marzetta19}.

Despite the remarkable potential of gain improvement introduced by superdirective antenna arrays, there are several challenges that hinder its practical realization. In particular, it appeared that the precision of superdirective beamforming vector can affect the performance of the antenna array\cite{altshuler2005monopole}. And the calculation of the beamforming vector is known as a very challenging problem. This is related to the strong antenna mutual coupling which is hard to characterize in superdirective arrays. The authors of \cite{altshuler2005monopole} initiates the practical measurements of a two-element superdirective array, whose beamforming vector calculation ignores the mutual field coupling between antennas. In \cite{ivrlavc2010toward}, the authors analyzed the performance of communication systems using circuit theory and numerically simulated the superdirectivity of transmit and receive arrays. However, the study did not consider the field coupling between antennas. In addition, it was not validated in practice. In Sec. \ref{IV}, we show that this scheme experiences performance degradation in real systems due to the lack of consideration for field coupling. The authors of \cite{4648403} propose a decoupling array structure to enhance the directivity, which may lead to high hardware complexity, especially when the number of antennas is large. A compact parasitic four-element superdirective array is designed in \cite{clemente2015design}, yet the method depends on the accuracy of simulation tools and may introduce unwanted negative resistance to array design.
 \subsection{Prior Works}
The existing beamforming methods in the MIMO system considering the mutual coupling can be generally divided into two categories. The first category mainly suppresses the mutual coupling\cite{zheng2017robust,selvaraju2018mutual,zhang2019mutual}, while the second category  leverages the mutual coupling to enhance the system performance\cite{williams2020communication,williams2021multiuser,wang2022beamforming,akrout2023super}.

In the first category, the key challenge is how to reduce mutual coupling. Specifically, the authors in \cite{zheng2017robust} presents a robust adaptive beamforming algorithm that addresses the detrimental effects of the mutual coupling. To suppress the mutual coupling through antenna designs, the authors of \cite{selvaraju2018mutual} propose an antenna array with complementary split ring resonator (CSRR) for beamforming applications. 
The authors in \cite{zhang2019mutual} introduces the concept of decoupling ground (DG) to enhance the isolation in massive MIMO antenna arrays. By adjusting the shape of the ground plane under each element, the mutual coupling from free space and the ground plane is canceled out. However, this category of beamforming overlooks the potential benefits of mutual coupling, which may limit further improvements in antenna array performance.

The second category of beamforming mathods view  mutual coupling as a benefit. For instance, the authors of \cite{williams2020communication} derive the expression of coupling-aware beamforming vector, which shows that superdirectivity can be potentially realized in compact antenna arrays. The authors in  \cite{williams2021multiuser} studies the large intelligent surfaces-based multi-user communication scenario considering mutual coupling. The authors point out the practical limitations such as ohmic losses in achieving superdirectivity. The authors of \cite{wang2022beamforming} investigate the beamforming performance of holographic surfaces implemented as arrays of antenna elements with less than half-wavelength spacing packed in finite surface apertures. The numerical results show the superiority of coupling-aware beamforming in comparison to traditional beamforming. In \cite{akrout2023super}, the authors propose a unified model for tightly-coupled massive MIMO antennas leverages mutual coupling to enhance bandwidth and MIMO gains. 

While existing beamforming methods leveraging mutual coupling \cite{williams2020communication,williams2021multiuser,wang2022beamforming,akrout2023super} have shown performance improvements, they have primarily relied on numerical software for validation. However, it is important to note that the coupling induced by near-field evanescent modes can hardly be accurately represented in numerical simulations. In our study, we have found that field coupling significantly impacts the performance of antenna arrays. Thus, to the best of our knowledge, few works have addressed the practical implementation of superdirective beamforming that takes into account both field coupling and impedance coupling.
 \subsection{Our Contributions}
In this paper, we revisit the problem of superdirective beamforming vector calculation by taking a double-coupling effect of antenna arrays into consideration. We find there are two sorts of coupling which we call impedance coupling and field coupling respectively in compact antenna arrays. The impedance coupling represents the  power radiation interaction between antennas, or is regarded as the mutual impedance matrix of the array in free space. While the field coupling leads to the distortion of the radiation pattern of every antenna in the array.
We propose a novel coupling matrix-based method to depict the field coupling of superdirective antenna arrays.  More specifically, our method is based on the fact that the coupling field  can be viewed as the antenna array excited with a specific beamforming vector. By leveraging the electric field containing the coupling information, the field coupling matrix of antenna arrays is derived precisely in simulations. Moreover, in practice the electromagnetic environment is more complex than the ideal simulation conditions. It is meaningful to model the coupling of antenna arrays with realistic measured data.  To this end, we further devise a measurement scheme to compute the coupling matrices of a practical antenna array.

The contributions of this paper are as follows:
\begin{itemize}
    \item  We propose to introduce a field coupling matrix to the superdirective antenna array model. In the traditional superdirective beamforming method, the field coupling is not considered, which impairs the directivity. For an antenna array with small spacing, the field coupling plays a key role in characterizing the interactions between the antennas. Substantial directivity improvements over existing beamforming methods are observed in simulations and measurements.
    \item   We prove that the field coupling matrix has the unique solution for an antenna array, and itself has the ability to fully characterize the distorted coupling field. Based on this proven theorem, we propose a method that accurately calculates the coupling matrix using only a number of angle sampling points on the order of the number of antennas. Simulation results also show that the field coupling matrix computed by our method is precise. 
    \item We further propose a practical method to calculate the coupling matrix based on the realistic measurement data. In practice, there are some conditions that are difficult to be considered by simulations, e.g., the non-ideal manufacturing technology of the antennas, the radiation effect of the coaxial line, the influence of the connectors, etc., which make the actual radiation environment of the antenna array quite different from the ideal simulation.  The proposed method is able to cope with these practical limitations. 
    \item We build a prototype of the superdirective antenna array. To the best of our knowledge, the realization of independently controlled four or  eight-channel superdirective antenna arrays has not been presented so far. A beamforming control board, which offers 7-bit amplitude precision and 8-bit phase precision, is designed to excite the antenna array. The superdirective antenna arrays with four and eight elements are tested in a microwave anechoic chamber. Our proposed methods are validated with this platform in the microwave anechoic chamber, and the gain of directivity over existing methods is confirmed with real-world measurements.

\end{itemize}
In particular, this paper differs from the previous work \cite{DBLP:conf/icc/HanYM22} in several significant ways:

\begin{itemize}
    \item \textbf{Proof of uniqueness for the coupling matrix}: We have added a formal proof of the uniqueness of the coupling matrix, which was not provided in the conference paper. This is a crucial theoretical contribution that enhances the foundation of superdirective beamforming.
    \item \textbf{Extensive simulation experiments}: We have expanded the simulation section with more comprehensive experiments, providing deeper insights and validation of the proposed methods, which were not included in the previous version.
    \item \textbf{Reduction of angular sampling points}: In this paper, we introduce a novel approach that reduces the number of angular sampling points needed to compute the coupling matrix \( \mathbf{C} \), improving the efficiency of calculations while maintaining comparable accuracy. This was not explored in the conference paper.
    \item \textbf{Practical implementation}: We propose a method for practically implementing superdirective beamforming, providing a clear path for real-world application, which was absent in the conference paper.
    \item \textbf{Superdirective antenna testing system}: A significant addition is the design and construction of a superdirective antenna testing system, which validates the performance and applicability of the proposed methods in real environments.
\end{itemize}
\subsection{Organization and Notation}
This paper is organized as follows: In Sec. \ref{II} we introduce the superdirective beamforming method based on array theory. In Sec. \ref{III} we propose the coupling matrix and full-wave simulation-based superdirective beamforming approach. In Sec. \ref{IV} the numerical results including simulations and practical measurements are shown. Finally, conclusions are drawn in Sec. \ref{V}.

Notations: The boldface font stands for vector and matrix. ${{\bf{X}}^\dag },{{\bf{X}}^T},{{\bf{X}}^*},$ and ${{\bf{X}}^H}$ denote the Moore–Penrose pseudoinverse, transpose, conjugation, and Hermitian transpose of $\bf{X}$, respectively. ${\mathbb{C}^{a \times b}}$ is a matrix space with $a$ rows and $b$ columns. $\left| {\bf{x}} \right|$ denotes the absolute value of $\bf{x}$ and ${\left\| {\bf{X}} \right\|_2}$ is the second-order induced norm of $\bf{X}$. $ \buildrel \Delta \over = $ refers to the definition symbol. 

\section{Beamforming of superdirective arrays based on array theory}\label{II}
According to \cite{uzkov1946approach}, only when at least three requirements are met can an antenna array achieve superdirectivity. The first is that the spacing  between two neighboring antennas should be less than half a wavelength, which is simple to accomplish.
Second, superdirectivity requires the precise calculation of beamforming vectors, which in turn relies on accurately modeling the coupling between antennas. This coupling is influenced by various factors, such as the relative positions of the antennas, the coaxial cables, the antenna mounting structures, etc. The method proposed in this paper addresses this challenge by taking the field coupling into account and measuring the electromagnetic environment information directly. By obtaining accurate data from the environment, the beamforming vectors can be calculated with greater precision, ensuring that the conditions for superdirectivity are met. Third, implementing superdirectivity in real-world scenarios necessitates extremely precise control over the amplitude and phase of signals across multiple antenna elements. This precision is difficult to achieve due to limitations in hardware, such as phase noise, thermal drift, and manufacturing tolerances, which can lead to discrepancies between the intended and actual beamforming vectors.  One potential solution to this challenge is the use of high-precision RF amplitude and phase control chips. These chips can provide the necessary accuracy by offering fine-grained control over the signal parameters. Additionally, advanced calibration techniques and real-time adaptive algorithms can be employed to continuously adjust the beamforming vectors, compensating for any deviations caused by hardware imperfections or environmental changes.
This section will present the traditional method to calculate the  superdirective beamforming vector based on array theory.

Consider a uniform linear array consisting of $M$ antennas, each having an identical pattern function $g(\theta, \phi)$, where $\theta$ and $\phi$ represent the far-field coordinates in a spherical coordinate system. For simplicity, assume the first antenna is positioned at the origin of the Cartesian coordinate system, and the remaining antennas are uniformly distributed along the positive half of the $y$-axis with a spacing of $d$. As a result, the complex far-field pattern of this antenna array is given by
\begin{align}\label{array_pattern_function}
f(\theta,\phi)=\sum_{m=1}^Ma_mg(\theta,\phi)e^{jk\mathbf{\hat r}\cdot \mathbf{r}_m},
\end{align}
where $a_m$ represents the complex excitation coefficient corresponding to the current on the $m$-th antenna, $k = 2\pi/\lambda$ is the wave number, $\mathbf{\hat{r}}$ is the unit vector in the far-field direction $(\theta, \phi)$ in the spherical coordinate system, and $\mathbf{r}_m$ denotes the position of the $m$-th antenna. Consequently, the directivity factor at $(\theta_0, \phi_0)$ can be calculated as
\begin{align}\label{theta0_phi0_pattern_function}
  D(\theta_0,\phi_0)= \frac{|\sum_{m=1}^{M}a_mg(\theta_0,\phi_0)e^{jk\mathbf{\hat r}_0\cdot\mathbf{ r}_m}|^2}{\frac{1}{4\pi}\int_S |\sum_{m=1}^{M}a_mg(\theta,\phi)e^{jk\mathbf{\hat r}\cdot \mathbf{r}_m}|^2dS},
\end{align}
where $\mathbf{\hat{r}}_0$ is the unit vector in the direction $(\theta_0,\phi_0)$. 

To derive the maximum directivity of an antenna array, the expression in \eqref{theta0_phi0_pattern_function} is complex and requires simplification. The denominator of \eqref{theta0_phi0_pattern_function} can be expanded as
\begin{align}\label{denominator}
&\frac{1}{4 \pi} \int_{0}^{2 \pi} \int_{0}^{\pi}\left|\sum_{m=1}^{M} a_{m} g(\theta, \phi) e^{j k \mathbf{\hat{r}} \cdot \mathbf{r}_m}\right|^{2} \sin \theta d \theta d \phi \notag\\
&=\frac{1}{4 \pi} \int_S \sum_{n=1}^M \sum_{m=1}^{M} a_{n} a_{m}^{*}|g(\theta,\phi)|^2 e^{j k \mathbf{\hat{r}} \cdot \mathbf{r}_n} e^{-j k \mathbf{\hat{r}} \cdot \mathbf{r}_m} dS \notag\\
&= \sum_{n=1}^{M} \sum_{m=1}^{M} a_{n} a_{m}^{*} \frac{1}{4 \pi}\int_S |g(\theta,\phi)|^2e^{j k \mathbf{\hat{r}} \cdot \mathbf{r}_n} e^{-j k \mathbf{\hat{r}} \cdot \mathbf{r}_m} dS,
\end{align}
where the $a_n^*$ denotes the conjugate of the complex value $a_n$. 

For the integral terms in \eqref{denominator}, we present the following equation:
\begin{align}\label{zmn}
z_{mn} \buildrel \Delta \over = \frac{1}{4\pi}\int_0^{2\pi}\int_0^\pi |g(\theta,\phi)|^2e^{jk\mathbf{\hat r}\cdot\mathbf{r}_m }e^{-jk\mathbf{\hat r}\cdot \mathbf{r}_n}\sin\theta d\theta d\phi.
\end{align}
Since the power radiated by the antenna array is active power, $z_{mn}$ represents the real part of the normalized mutual impedance between the $m$-th and $n$-th antennas \cite{zucker_antenna_1969}. Thus, \eqref{denominator} can be rewritten as
     \begin{align}\label{simplify_denominator}
  \setlength{\abovedisplayskip}{4pt}
 &\sum_{m=1}^{M}\sum_{n=1}^{M}a_ma^*_n\frac{1}{4\pi} \int_S |g(\theta,\phi)|^2e^{jk\mathbf{\hat r}\cdot \mathbf{r}_m }e^{-jk\mathbf{\hat r}\cdot \mathbf{r}_n}dS \notag  \\
  &=\sum_{m=1}^{M}\sum_{n=1}^{M}a_ma^*_nz_{mn}.
\end{align} 

For the simplicity of notations, \eqref{theta0_phi0_pattern_function} is further rewritten using two vectors $\mathbf{a},\mathbf{e}\in \mathbb{C}^{M\times 1}$
\begin{align}\label{vector_D}
D=\dfrac{\mathbf{a}^T\mathbf{e}\mathbf{e}^H\mathbf{a}^*}{\mathbf{a}^T\mathbf{Z}\mathbf{a}^*},
\end{align}
where $\mathbf{a}$ denotes the beamforming vector
\begin{align}\label{vector_a}
	\mathbf{a}=\left[ a_1,\,a_2,\,\cdots ,\,a_M \right]^T,
\end{align}
and
\begin{align}\label{vector_e}
	\mathbf{e}=\left[ e^{jg\mathbf{\hat{r}}\cdot \mathbf{r}_1}g(\theta,\phi),\,e^{jk\mathbf{\hat{r}}\cdot \mathbf{r}_2}g(\theta,\phi),\,\cdots ,\,e^{jk\mathbf{\hat{r}}\cdot \mathbf{r}_M}g(\theta,\phi) \right]^T.
\end{align}
Note that secondary radiation caused by other objects around the antenna is not considered in this paper. $\mathbf{Z}\in \mathbb{C}^{M\times M}$ stands for the real part of the normalized impedance matrix, which is symmetric. It can be considered as the impedance coupling between antennas
\begin{align}\label{impedance_matrix}
\mathbf{Z}=\left[\begin{array}{ccc}
z_{11} & \ldots & z_{1 M} \\
\vdots & \ddots & \vdots \\
z_{M 1} & \cdots & z_{M M}
\end{array}\right].
\end{align}
Note that \eqref{vector_D} is in the form of a Rayleigh quotient. The problem of maximizing directivity can be addressed by taking the derivative of \eqref{vector_D} with respect to the vector $\mathbf{a}$ as follows:
\begin{equation}\label{derivative}
  \frac{\partial D}{\partial \mathbf{a}}
=\frac{2\mathbf{e}\mathbf{e}^H\mathbf{a}^*}{\mathbf{a}^T\mathbf{Z}\mathbf{a}^*}-2D\frac{\mathbf{Z}\mathbf{a}^*}{\mathbf{a}^T\mathbf{Z}\mathbf{a}^*}.
\end{equation}
Letting the above formula equal to zero yields
\begin{align}\label{forcezero}
  \mathbf{e}\mathbf{e}^H\mathbf{a}^*=D\mathbf{Z}\mathbf{a}^*.
\end{align}
It can be observed that \eqref{forcezero} represents a generalized eigenvalue problem in the form $\mathbf{A}\mathbf{x} = \lambda \mathbf{B}\mathbf{x}$, where $\mathbf{A}$ and $\mathbf{B}$ are matrices, $\mathbf{x}$ is the generalized eigenvector of $\mathbf{A}$ and $\mathbf{B}$, and $\lambda$ is the corresponding generalized eigenvalue. By multiplying both sides of \eqref{forcezero} by $\mathbf{Z}^{-1}$, we obtain
\begin{align}\label{general}
  \mathbf{Z}^{-1}\mathbf{e}\mathbf{e}^H\mathbf{a}^*=D\mathbf{a}^*.
\end{align}
The eigenvalue of the above equation has a unique solution, as shown in \cite{zucker_antenna_1969}. Given that \(\rm{rank}(\mathbf{e}\mathbf{e}^H) = 1\), the following relationship holds:
\begin{align}
    \rm{rank}(\mathbf{Z}^{-1}\mathbf{e}\mathbf{e}^H)\leq 1.
\end{align}
In addition, if $\rm{rank}(\mathbf{Z}^{-1}\mathbf{e}\mathbf{e}^H)=0$, it implies that the maximum value of \eqref{vector_D} is 0, which is not consistent with reality, therefore
\begin{align}
    \rm{rank}(\mathbf{Z}^{-1}\mathbf{e}\mathbf{e}^H)= 1.
\end{align}
Hence, the only one non-zero eigenvalue of $\mathbf{Z}^{-1}\mathbf{e}\mathbf{e}^H$ is the maximum value of the directivity factor $D_{\max} $. Hence,  \eqref{general} can be rewritten as
\begin{align}\label{generalmax}
  \mathbf{Z}^{-1}\mathbf{e}\mathbf{e}^H\mathbf{a}^*=D_{\max}\mathbf{a}^*.
\end{align}
Since
\begin{align}\label{hh}
  \mathbf{Z}^{-1}\mathbf{e}\mathbf{e}^H\mathbf{a}^*&=\mathbf{Z}^{-1}\mathbf{e}(\mathbf{e}^H\mathbf{a}^*) \notag \\
  &=\xi\mathbf{Z}^{-1}\mathbf{e},
\end{align}
where $\xi = \mathbf{e}^H \mathbf{a}^*$ is a scalar. Therefore, the beamforming vector that maximizes the directivity factor of the antenna array can be expressed as
\begin{align}\label{aa}
  \mathbf{a}=\frac{\xi}{D_{\max}}\mathbf{Z}^{-1}\mathbf{e}^* = \mu \mathbf{Z}^{-1}\mathbf{e}^*,
\end{align}
where the scalar $\mu$ is defined as $\mu = \frac{\xi}{D_{\max}}$. By substituting \eqref{aa} into \eqref{vector_D}, we obtain the maximum directivity factor as
\begin{align}\label{maxD}
  D_{\max}=\mathbf{e}^H\mathbf{Z}^{-1}\mathbf{e}.
\end{align}
However, the above derivation process neglects the field coupling between the antennas. In practice, strong field coupling occurs when the antenna spacing is small, causing the radiation pattern of one antenna to be distorted by the influence of surrounding antennas. In \eqref{array_pattern_function}, if $a_n$ is set to 1 and $a_m$ for $m=2,\cdots,M, m \ne n$, is set to 0, then $f(\theta,\phi)$ would be $a_n g(\theta,\phi)e^{jk\mathbf{\hat{r}} \cdot \mathbf{r}_n}$, indicating that the radiation pattern of the $n$-th antenna is unaffected by other antennas, which is unrealistic. Given the small spacing required by superdirective antenna arrays, mutual field coupling cannot be ignored.
As a result, the beamforming vector based on the traditional approach \eqref{aa} may not be suitable for achieving maximum directivity in an antenna array. To address this and achieve more realistic superdirectivity, we propose a superdirective array analysis method that incorporates both impedance coupling and field coupling matrices. 
\section{Proposed superdirective beamforming based on double couplings}\label{III}
    In this section, we  propose the superdirective beamforming approach which considers both the impedance coupling effect and the field coupling effect. We first show how to obtain the superdirective beamforming vector with the full-wave simulations, which is the foundation of our superdirective beamforming realization method in practice. Then, since there are still some factors not considered in the simulations, we further propose a  measurement-based approach to acquire the impedance coupling and field coupling matrices, which will help produce superdirectivity in practice. 

\subsection{Full-wave simulation-based acquisition of the superdirective beamforming vector}

Since the superdirectivity is produced only when the antenna spacing is small, the field coupling effect should not be ignored when deriving  the beamforming vector.
In this section, we propose to characterize the field coupling effect between antennas with a coupling matrix, which is obtained using  full-wave simulation tools.

Taking into account the mutual field coupling between antennas, a new pattern function \(l(\theta, \phi)\) for the superdirective array, based on field coupling coefficients, can be expressed as follows \cite{1330259}:
\begin{align}\label{couplefxt}
l(\theta,\phi)=\sum_{m=1}^{M}\sum_{n=1}^{M}c_{nm}a_mg(\theta,\phi)e^{jk\mathbf{\hat{r}}\cdot\mathbf{r}_n},
\end{align}
where $c_{nm}$ represents the field coupling coefficient between the $m$-th and $n$-th antennas. Incorporating these coupling coefficients into the antenna array model allows for the quantification of the interactions between antennas.

For ease of analysis, the field coupling coefficients are represented in matrix form as
\begin{equation}
   \begin{aligned}\label{coupling_matrix}
\mathbf{C}=\left[ \begin{matrix}
	c_{11}&		...&		c_{1M}\\
	\vdots&		\ddots&		\vdots\\
	c_{M1}&		...&		c_{MM}\\
\end{matrix} \right].
\end{aligned}
\end{equation}

The electric field $\mathbf{E}(\theta ,\phi )$ radiated by the antenna in the far-field region is denoted by
\begin{align}
  \mathbf{E}(\theta ,\phi 
)=[E_{\hat{\theta}}(\theta,\phi),E_{\hat{\phi}}(\theta,\phi)],
\end{align}
with the subscripts $\hat{\theta}$ and $\hat{\phi}$ denoting the $\theta$-component and $\phi$-component  respectively. The electric field in space is sampled and vectorized to obtain the vector
\begin{align}\label{barE}
\overline{\boldsymbol{\varepsilon }}= [&E_{\hat\theta}\left( \theta _1,\phi _1 \right) ,E_{\hat\phi}\left( \theta _1,\phi _1 \right) ,E_{\hat\theta}\left( \theta _2,\phi _2 \right) ,E_{\hat\phi}\left( \theta _2,\phi _2 \right), \notag\\&\cdots,
E_{\hat\theta}\left( \theta _P,\phi_P\right) ,E_{\hat\phi}\left( \theta _P,\phi _P \right)  ] ^T,
\end{align}
where $P$ is the number of angular sampling points, $\overline{\boldsymbol{\varepsilon }}$ is thus a column vector of size $2P\times 1$.

The detailed steps for calculating the field coupling matrix are outlined below. 

The first step is to obtain the electric field radiated by the antenna array without considering the field coupling effect between antennas, assuming the field coupling matrix \(\mathbf{C}\) is an identity matrix, and only one antenna of the array is excited. Specifically, a single antenna, modeled in full-wave simulation software, is initially placed at the origin of the coordinate system. The electric field \(\mathbf{e}_{s1} \in \mathbb{C}^{2P \times 1}\), as shown in \eqref{barE}, is then obtained through simulation. Next, the antenna is moved to \([0,0,d]\), and the electric field \(\mathbf{e}_{s2}\) is similarly obtained. This procedure continues, moving the antenna a distance \(d\) each time along the positive half-axis of the \(z\)-axis to measure its electric field at various positions. This process is repeated \(M\) times until the antenna is finally placed at \([0,0,(M-1)d]\). Consequently, the set of electric fields for a single antenna at different positions is given by
\vspace{-0.1cm}
\begin{align}\label{Es}
 	\mathbf{E}_s=\left[\mathbf{e}_{s1},\,\mathbf{e}_{s2},\,\mathbf{e}_{s3},\,\cdots,\,\mathbf{e}_{sM}\right],
 \end{align}
 where $\mathbf{E}_s$ is a matrix of size $2P\times M$.

The second step involves obtaining the electric field radiated by the antenna array while accounting for the field coupling effect between antennas. Again, only one antenna of the array is excited, while the others are terminated with matched loads. Specifically, a uniform linear array consisting of \(M\) antennas with spacing \(d\) is modeled in full-wave simulation software, where the \(m\)-th antenna is positioned at \([0,0,(m-1)d]\). Then, only the \(m\)-th antenna (\(m = 1, \cdots, M\)) is excited to radiate the electric field \(\mathbf{e}_{cm}\). In full-wave simulation, \(\mathbf{e}_{cm}\) includes the mutual field coupling effect between antennas, representing the superposition of the electric fields radiated by all antennas. Define \(\mathbf{E_c} \in \mathbb{C}^{2P \times M}\) as
\begin{align}\label{Q}
\mathbf{E}_c=[\mathbf{e}_{c1},\,\mathbf{e}_{c2},\,\cdots,\,\mathbf{e}_{cM}],
\end{align}
where \(\mathbf{E}_c\) represents the set of electric fields radiated by a single antenna in a realistic antenna array, taking into account the inter-element field coupling effect. Consequently, the field coupling matrix \(\mathbf{C}\) serves as a bridge converting \(\mathbf{E}_s\) to \(\mathbf{E}_c\), and the field coupling coefficients can be determined by solving the following linear equations:
\begin{align}
    \left\{ \begin{array}{c}
	{\mathbf{e}_{{c}}}_1=c_{11}{\mathbf{e}_{{s}}}_1+c_{21}{\mathbf{e}_{{s}}}_2+\cdots +c_{M1}{\mathbf{e}_{{s}}}_M\\
	{\mathbf{e}_{{c}}}_2=c_{12}{\mathbf{e}_{{s}}}_1+c_{22}{\mathbf{e}_{{s}}}_2+\cdots +c_{M2}{\mathbf{e}_{{s}}}_M\\
	\vdots\\
	{\mathbf{e}_{{c}}}_M=c_{1M}{\mathbf{e}_{{s}}}_1+c_{2M}{\mathbf{e}_{{s}}}_2+\cdots +c_{MM}{\mathbf{e}_{{s}}}_M\\
\end{array} \right. .
\end{align}
\addtolength{\topmargin}{0.01in}
Hence, the field coupling matrix can be calculated as
\begin{align}\label{Cmatrix}
	\mathbf{C}=\mathbf{E}_s^{\dagger}\mathbf{E}_c.
\end{align}
Note that \(\mathbf{C}\) is not a symmetric matrix, meaning \(c_{mn}\) is not necessarily equal to \(c_{nm}\). This asymmetry arises because the field coupling effect from the \(m\)-th antenna to the \(n\)-th antenna and vice versa is influenced by the number of nearby antennas at their respective positions.

\begin{theorem}\label{thm1}

 The field coupling matrix $\mathbf{C}$ of a dipole antenna array has a unique solution, and itself can fully characterize radiation pattern distortion.
\end{theorem}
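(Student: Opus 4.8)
The plan is to separate the theorem into its two claims and prove them in order: first that the relation $\mathbf{E}_c=\mathbf{E}_s\mathbf{C}$ holds exactly so that $\mathbf{C}$ carries all of the distortion information, and second that $\mathbf{C}$ is then pinned down uniquely as the solution of that linear system. The two parts interact: the characterization part guarantees that $\mathbf{e}_{cm}$ lies in the column space of $\mathbf{E}_s$, which is what makes the pseudoinverse formula \eqref{Cmatrix} an exact (not merely least-squares) solution, and the uniqueness part then guarantees there is only one such solution.

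For the characterization, I would begin from the physical fact that every element is a dipole of fixed geometry. Mutual coupling only reshapes the amplitude and phase of the current induced on each element; it does not alter the shape of an individual dipole's pattern, which remains $g(\theta,\phi)$ modulated by its position phase $e^{jk\mathbf{\hat r}\cdot\mathbf{r}_n}$. Hence, when the $m$-th element is driven and the others are match-terminated, the distorted far field is the superposition of $M$ such elemental patterns weighted by the induced currents, i.e. $\mathbf{e}_{cm}=\sum_{n=1}^M c_{nm}\mathbf{e}_{sn}$. Stacking over $m$ gives $\mathbf{E}_c=\mathbf{E}_s\mathbf{C}$ with equality, so $\mathbf{C}$ (the matrix of induced-current coefficients) encodes the entire distortion; linearity then reproduces the distorted pattern of the full array under any beamforming vector $\mathbf{a}$ as $\mathbf{E}_s\mathbf{C}\mathbf{a}$, matching \eqref{couplefxt}. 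This is the precise sense in which $\mathbf{C}$ fully characterizes the radiation pattern distortion.

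For uniqueness I would show $\mathbf{E}_s$ has full column rank $M$, so that $\mathbf{E}_s\mathbf{C}=\mathbf{E}_c$ determines $\mathbf{C}=\mathbf{E}_s^{\dagger}\mathbf{E}_c$ uniquely. The key is to factor $\mathbf{E}_s$. With the elements along the $z$-axis, the position phase of the $n$-th isolated element at sampling point $p$ is $e^{jk(n-1)d\cos\theta_p}=x_p^{\,n-1}$ where $x_p=e^{jkd\cos\theta_p}$, so column $\mathbf{e}_{sn}$ is the vector of sampled pattern values multiplied entrywise by $x_p^{\,n-1}$. Thus $\mathbf{E}_s=\mathbf{D}_g\mathbf{V}$, where $\mathbf{D}_g$ is the $2P\times 2P$ diagonal matrix of the sampled $\hat\theta$- and $\hat\phi$-components of $g$, and $\mathbf{V}$ is the $2P\times M$ matrix whose rows are the Vandermonde rows $[1,\,x_p,\,\dots,\,x_p^{M-1}]$ (each appearing twice, once per field component). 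A nontrivial null vector $\boldsymbol{\alpha}$ of $\mathbf{E}_s$ would force the degree-$(M-1)$ polynomial $\sum_{n=1}^M\alpha_n x^{\,n-1}$ to vanish at every $x_p$ where $g$ is nonzero; if the sampling angles are chosen so that at least $M$ of the $x_p$ are distinct with $g$ nonzero there, this polynomial has more roots than its degree and is identically zero, forcing $\boldsymbol{\alpha}=\mathbf{0}$. This also explains why $P$ on the order of $M$ suffices, matching the sampling claim in the contributions.

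The main obstacle is this rank argument, specifically ensuring that enough sampling directions survive the masking by $\mathbf{D}_g$. The proof must guarantee that the chosen angles yield at least $M$ distinct values $x_p=e^{jkd\cos\theta_p}$ at which the dipole pattern does not vanish, which in turn requires distinct $\cos\theta_p$ within a range where the complex exponential is injective, and care at the degenerate directions (e.g. the poles of a $z$-oriented dipole) where one or both field components vanish. Once these conditions on the sampling grid are secured, the submatrix of $\mathbf{V}$ on the surviving rows contains an invertible $M\times M$ Vandermonde block, and the remaining steps are routine linear algebra.
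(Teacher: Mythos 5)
Your proposal is correct, and it follows the same two-part skeleton as the paper's proof in Appendix \ref{Appen. A} (coupled field lies in the span of the isolated-element fields, hence the coefficients exist; linear independence of those fields, hence they are unique), but the key lemma is established by a genuinely different argument. The paper works at the level of \emph{continuous} far-field functions: it writes $\mathbf{e}(\mathbf{r}_m)=A\,e^{jk(m-1)d\sin\theta\sin\phi}\sin\theta\,\hat{\theta}$, assumes a vanishing linear combination, and repeatedly differentiates with respect to $\sin\phi$ to annihilate the terms one at a time (a Wronskian-style argument), concluding $k_M=0$ and backtracking to $k_i=0$ for all $i$. You instead work directly with the finite sampled matrix $\mathbf{E}_s\in\CC^{2P\times M}$ that actually enters \eqref{Cmatrix}, factor it as a diagonal pattern matrix times a Vandermonde matrix in $x_p=e^{jkd\cos\theta_p}$, and kill any null vector by counting roots of a degree-$(M-1)$ polynomial. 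Your route buys something the paper leaves implicit: linear independence of the continuous functions does not by itself guarantee that the \emph{sampled} matrix $\mathbf{E}_s$ has full column rank, so the paper's proof strictly justifies uniqueness only for full-field data, whereas yours spells out the exact conditions on the sampling grid (at least $M$ distinct $x_p$ at which the element pattern is nonzero, with injectivity of $\theta_p\mapsto e^{jkd\cos\theta_p}$ secured by $kd<\pi$, and care at the dipole nulls) under which the pseudoinverse formula \eqref{Cmatrix} recovers the unique $\mathbf{C}$; this also dovetails naturally with the sampling-count claim of Corollary \ref{coro1}. The paper's argument, in turn, is cleaner and coordinate-light, needing no hypothesis on the measurement grid. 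One cosmetic remark: you place the array along the $z$-axis (phase $e^{jk(n-1)d\cos\theta_p}$) as in Sec.~\ref{III}, while the appendix uses the $y$-axis convention (phase $e^{jk(m-1)d\sin\theta\sin\phi}$); either choice is fine and does not affect the argument.
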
 
\begin{proof}
The proof can be found in Appendix \ref{Appen. A}.
\end{proof}
\begin{figure}[htbp]
  \centering
  \includegraphics[width=3.4in]{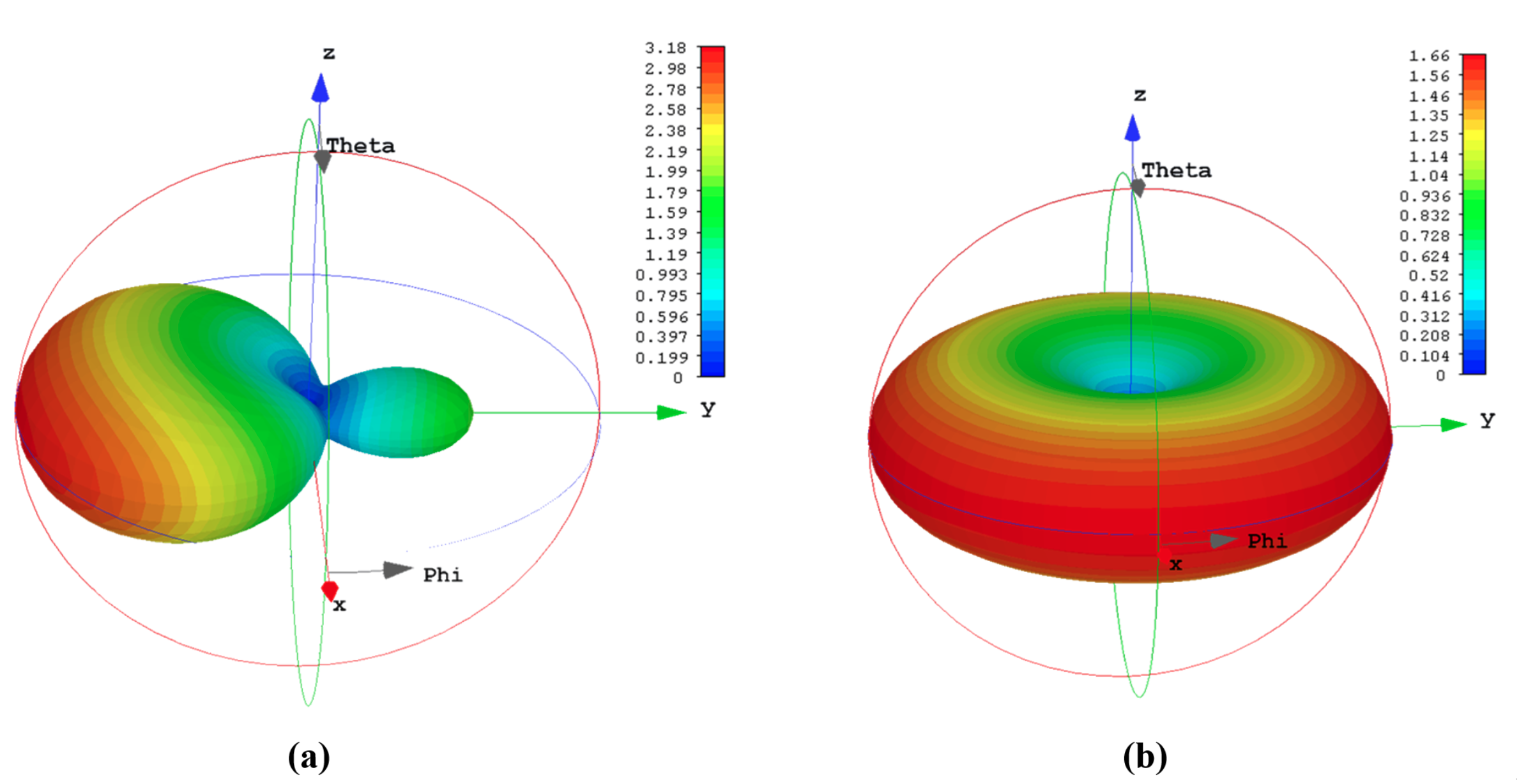}  
  \caption{The antenna pattern (a) with distortion due to the field coupling, (b) without distortion. }\label{pattern}
\end{figure}
When only one antenna is excited in the array, the distorted pattern due to the field coupling is shown in Fig. \ref{pattern} (a). The array configuration is $M=4$ and $d=0.1\lambda$. More detailed parameters of the antenna will be shown in Sec. \ref{IV}. It can be found the pattern is irregular and far different from the antenna pattern without distortion as shown in Fig. \ref{pattern} (b). According to Theorem \ref{thm1}, the field coupling matrix can be used to fully characterize the distortion of the radiation pattern, e.g., the difference between  Fig. \ref{pattern} (a) and Fig. \ref{pattern} (b). The intuition is that the distorted electric field can be uniquely represented by the electric field  $\mathbf{e}_{c_n}$ \cite{hald1988spherical}, which is also uniquely determined by the matrix $\mathbf{C}$ and the undistorted field ${\mathbf{E}}_s$ illustrated by Fig. \ref{pattern} (b). Due to the symmetry in a linear antenna array and Theorem \ref{thm1}, the field coupling matrix $\mathbf{C}$ can be uniquely determined with a reduced number of angular sampling points, as the following corollary shows.
\begin{corollary}\label{coro1}

 For an even number of antennas $M$, the  number of angular sampling points required to determine $\mathbf{C}$ is
    \begin{equation}
        P \geq \frac{M}{2}.
    \end{equation}
    
  For an odd number of antennas $M$, the  number of angular sampling points required to determine $\mathbf{C}$ is:
\begin{equation}
    P \geq M
\end{equation}

\end{corollary}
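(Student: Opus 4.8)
The plan is to exploit the reflection symmetry of the uniform linear array, which endows $\mathbf{C}$ with a centrosymmetric structure, and to combine this with the Vandermonde structure of the single-element field matrix $\mathbf{E}_s$. First I would establish the structural fact that $c_{nm}=c_{M+1-n,\,M+1-m}$, equivalently $\mathbf{C}=\mathbf{J}\mathbf{C}\mathbf{J}$ with $\mathbf{J}$ the exchange (anti-identity) matrix: since the elements at $[0,0,(m-1)d]$ form a configuration invariant under reflection about the array midpoint, the physical coupling from element $m$ into element $n$ is unchanged when both indices are mirrored, so $\mathbf{C}$ commutes with $\mathbf{J}$. By Theorem~\ref{thm1}, $\mathbf{C}$ is the unique solution of $\mathbf{E}_s\mathbf{C}=\mathbf{E}_c$, hence it is determined exactly when $\mathbf{E}_s$ has full column rank $M$; the whole corollary therefore reduces to counting how many angular samples make $\mathbf{E}_s$ full rank under the centrosymmetric constraint.

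Second, I would analyze the rows of $\mathbf{E}_s$. Because the $m$-th displaced element differs from the first only by the phase $e^{jk(m-1)d\cos\theta}$, the two polarization rows at a common sample $(\theta_p,\phi_p)$ are both scalar multiples of the single Vandermonde vector $\mathbf{v}(\theta_p)=[1,q_p,\dots,q_p^{M-1}]^T$ with $q_p=e^{jkd\cos\theta_p}$. Hence each angle contributes only one independent row direction, and without further structure full column rank forces $P\ge M$ distinct nodes $q_p$ for the Vandermonde matrix to be nonsingular. This is exactly the odd-$M$ bound: the reflection symmetry yields no additional reduction, so the general requirement $P\ge M$ is retained.

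Third, for even $M$ I would fold the system using the mirror map $\theta\mapsto\pi-\theta$, under which $q\mapsto 1/q$ and $\mathbf{v}(\pi-\theta)=q^{-(M-1)}\mathbf{J}\mathbf{v}(\theta)$. Since $\mathbf{C}=\mathbf{J}\mathbf{C}\mathbf{J}$ implies $\mathbf{J}\mathbf{C}=\mathbf{C}\mathbf{J}$, the centrosymmetry couples column $M+1-j$ to column $j$ via $\mathbf{c}_{M+1-j}=\mathbf{J}\mathbf{c}_{j}$, so the measured column $M+1-j$ at node $q_p$ re-expresses, after normalizing by the known single-element field, as an evaluation of column $j$ at the mirror node $1/q_p$. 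Thus every physical angle supplies two usable Vandermonde nodes $\{q_p,1/q_p\}$ for each independent column, and the determination condition becomes $2P\ge M$, i.e. $P\ge M/2$. The parity enters precisely here: for even $M$ the column indices split into $M/2$ mirror pairs $\{j,M+1-j\}$ with $j\neq M+1-j$, so the doubling is genuine for every independent column, whereas for odd $M$ the central index $(M+1)/2$ is its own mirror, its column is self-reciprocal, and evaluation at $1/q_p$ is already implied by evaluation at $q_p$, so the fold produces no new constraint.

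I expect the main obstacle to be making this parity dichotomy fully rigorous rather than heuristic. One must verify that the $2P$ nodes $\{q_p\}\cup\{1/q_p\}$ are genuinely distinct, excluding $\theta_p=\pi/2$ where $q_p=1$ and ruling out accidental coincidences $q_p=1/q_{p'}$, that the normalizing single-element field components are nonzero at the chosen angles, and that the folded Vandermonde system that results is nonsingular so that $M$ effective nodes indeed suffice when $M$ is even. A secondary care point is confirming that for odd $M$ the self-reciprocal central column cannot be rescued by the fold, so that the reduction legitimately collapses and the safe sufficient bound reverts to $P\ge M$.
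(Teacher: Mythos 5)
Your proposal reaches the same conclusions as the paper and rests on the same structural fact---the centrosymmetry $c_{ji}=c_{(M-j+1)(M-i+1)}$, i.e.\ $\mathbf{C}=\mathbf{J}\mathbf{C}\mathbf{J}$---but the mechanism by which you count angles is genuinely different. The paper's proof is a pure degrees-of-freedom count: column pairing leaves $M^2/2$ independent unknowns for even $M$, each angle yields $M$ equations (one per excited element), hence $P\geq (M^2/2)/M = M/2$; for odd $M$ the unpaired middle column is invoked to fall back to $P\geq M$. You instead work with the algebraic structure of the measurement operator: each angle is a Vandermonde node $q_p=e^{jkd\cos\theta_p}$, and the mirror identity $\mathbf{c}_{M+1-j}=\mathbf{J}\mathbf{c}_j$ converts the measurements of column $M+1-j$ at $q_p$ into evaluations of column $j$ at the reciprocal node $1/q_p$, so each column pair sees $2P$ nodes and $2P\geq M$ gives the even-$M$ bound. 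This buys something the paper's argument does not: a route to \emph{sufficiency}, since a Vandermonde system with distinct nodes is nonsingular, whereas matching the number of equations to the number of unknowns only establishes a necessary count and says nothing about the rank of the resulting system. You also surface the genericity conditions that the paper's proof silently assumes (the nodes $\{q_p\}\cup\{1/q_p\}$ must be distinct, which excludes $\theta_p=\pi/2$ where $q_p=1/q_p$, and accidental coincidences $q_p=1/q_{p'}$ and zeros of the element pattern must be avoided). One caveat applies to both treatments equally: for odd $M$ the middle column is palindromic ($\mathbf{J}\mathbf{c}=\mathbf{c}$ under the assumed symmetry), so it carries only $\lceil M/2\rceil$ free entries; neither your ``the fold produces no new constraint'' step nor the paper's remark that the middle column \emph{does not exhibit symmetry} actually forces $P\geq M$, and a sharper count would suggest roughly $(M+1)/2$ angles suffice---so the odd-$M$ bound should be read in both arguments as a safe, conservative requirement rather than a tight one.
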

\begin{proof}
    For both even and odd \( M \), the columns of \(\mathbf{C}\) exhibit symmetry: the \( i \)-th column is the reverse of the \((M - i + 1)\)-th column. This means \( c_{ji} = c_{(M-j+1)(M-i+1)} \) for columns that are not the middle column.
    For an array with \( M \) elements, the total number of elements in the coupling matrix \(\mathbf{C}\) is \( M^2 \).
    Due to the reverse symmetry property of \(\mathbf{C}\), for each column \( i \) (where \( i \neq \frac{M + 1}{2} \) if \( M \) is odd), the entries are related to those in the \((M - i + 1)\)-th column. Therefore, these columns do not need to be solved independently; solving one provides the values for its symmetric counterpart. At a single measurement angle, we can excite each antenna sequentially to obtain \( M \) equations (one for each antenna excitation). For an even number of antennas $M$, since there are \( \frac{M^2}{2} \) independent variables and each angle provides \( M \) equations, the minimal number of angles \( P_{\text{even}} \) required is:
     \[
     P_{\text{even}} = \frac{\frac{M^2}{2}}{M} = \frac{M}{2}.
     \]
      For an odd number of antennas $M$, since the middle column does not exhibit symmetry, a full set of \( M \) angles is required for the calculation.
\end{proof}

By leveraging the property demonstrated in the corollary above, we can update the coupling matrix $\mathbf{C}$ in dynamic and complex environments using measurements from only a few angles. This approach offers several benefits, including reduced measurement time, lower computational costs, and the ability to adapt in  real-time. Furthermore, it increases the practicality of deploying antenna arrays in environments where full-field data collection is impractical or costly.

According to \eqref{vector_D} and \eqref{couplefxt}, the directivity factor based on the field coupling matrix is thus
\begin{align}\label{coupleD}
	D_{c}=\dfrac{(\mathbf{C}\mathbf{b})^T\mathbf{e}\mathbf{e}^H(\mathbf{C}\mathbf{b})^*}{(\mathbf{C}\mathbf{b})^T\mathbf{Z}(\mathbf{C}\mathbf{b})^*},
\end{align}
where $\mathbf{b}$ is the vector of excitation coefficients that maximizes the directivity $D_c$ based on our proposed method. It is observed that this equation is also in the form of Rayleigh quotient. Thus it is analogous to \eqref{aa} to obtain the beamforming vector  $\mathbf{b}$ as
\begin{align}\label{mutuala}
	\mathbf{b}=\zeta\mathbf{C}^{-1}\mathbf{Z}^{-1}\mathbf{e}^*,
\end{align}
where $\zeta$ is a constant, which is determined by the total power constraint. 

When field coupling is ignored, the theoretical beamforming coefficients $\mathbf{a}$ are directly applied to the antenna array. However, due to the presence of field coupling, the actual currents induced on each antenna element are influenced by the currents on neighboring elements. This interaction causes the actual excitation coefficients to deviate from the theoretical ones, resulting in the actual excitation being represented as $\mathbf{Ca}$, where $\mathbf{C}$ is the coupling matrix.

\textbf{Impact on directivity}:
     The directivity of the array is directly influenced by the beamforming coefficients. The theoretical directivity $D_{\text{theory}}$ is calculated based on the assumption that the excitation coefficients are precisely $\mathbf{a}$. However, due to coupling, the actual directivity $D_{\text{actual}}$ is derived from the modified coefficients $\mathbf{Ca}$. The difference in directivity can be expressed as:
     \begin{align}
         \Delta D &= D_{\text{theory}} - D_{\text{actual}}\notag \\
         &= \dfrac{\mathbf{a}^T\mathbf{e}\mathbf{e}^H\mathbf{a}^*}{\mathbf{a}^T\mathbf{Z}\mathbf{a}^*}-\dfrac{(\mathbf{C}\mathbf{a})^T\mathbf{e}\mathbf{e}^H(\mathbf{C}\mathbf{a})^*}{(\mathbf{C}\mathbf{a})^T\mathbf{Z}(\mathbf{C}\mathbf{a})^*}
     \end{align}
      This equation illustrates how the directivity can degrade due to the influence of the coupling matrix $\mathbf{C}$.

\textbf{Impact on beamforming accuracy}: Beamforming accuracy depends on the alignment of the main lobe with the desired direction and the suppression of sidelobes. The misalignment caused by coupling effects leads to deviations in the main lobe direction and increases in sidelobe levels. The deviation in the beam pattern can be quantified by evaluating the difference in the resulting beam pattern power distribution, which can be represented as:
   \begin{align}
       &\Delta F \notag \\
       &= \scalebox{0.9} 10\log_{ 10}\left(\frac{1}{L}\sum_{\theta}|F_{\text{theory}}(\theta) - F_{\text{actual}}(\theta)|^2\right) \notag \\
       &= \scalebox{0.9}{$10\log_{ 10}\left(\frac{1}{L}\sum_{\theta}|\dfrac{\mathbf{a}^T\mathbf{e}(\theta)\mathbf{e}(\theta)^H\mathbf{a}^*}{\mathbf{a}^T\mathbf{Z}\mathbf{a}^*}-\dfrac{(\mathbf{C}\mathbf{a})^T\mathbf{e}(\theta)\mathbf{e}(\theta)^H(\mathbf{C}\mathbf{a})^*}{(\mathbf{C}\mathbf{a})^T\mathbf{Z}(\mathbf{C}\mathbf{a})^*}|^2\right)$},
   \end{align}
      where $L$ is the number of sample points. This shows that the sidelobe levels and the direction of the main lobe can be significantly altered due to field coupling, thus impacting the overall beamforming performance.

To analyze the effect of ohmic loss on the antenna array, it is assumed that the efficiency of each antenna is $\eta_a$. The normalized loss resistance induced antenna efficiency is\cite{zucker_antenna_1969}
\begin{align}\label{rloss}
    r_\text{loss} = \frac{1-\eta_a}{\eta_a}.
\end{align}
Thus, the gain of the antenna array can be written as
\begin{align}\label{gain}
    G=\frac{(\mathbf{C b})^{T} \mathbf{e e}^{H}(\mathbf{C b})^{*}}{(\mathbf{C b})^{T} (\mathbf{Z}+r_{\text {loss }} \mathbf{I}_{M})(\mathbf{C b})^{*}}
\end{align}
where $\mathbf{I}_{M}$ is an identity matrix of size $M\times M$.

In practice, the electromagnetic properties of the antenna array always differ from the idealized simulations. Some practical factors, such as the manufacturing of the antenna array, the influence of coaxial cable radiation, etc. will affect the radiation pattern and the directivity of the antenna array. 
To obtain the realistic impedance and field coupling of antenna arrays, a measurement-based coupling matrix calculation method is proposed as follows.

\subsection{Measurement-based computation of the superdirective beamforming vector}
The essence of our proposed superdirective beamforming method is the two coupling matrices, i.e., the field coupling matrix and the impedance matrix. However, the full-space electric field data needed to calculate the two coupling matrices cannot be accessed easily like in the simulation tools. Moreover, some realistic factors may not be easily modeled in simulations. As a result, we propose a practical method to obtain the coupling matrices. This method takes the realistic factors into consideration, and the input data needed is also accessible. In the following, we will show the details of this method in the context of a compact dipole antenna array. Nevertheless, this method can be applied to other types of antennas as well. 
Without loss of generality, the principal radiation plane of the dipole is the H-plane ($\theta=90^o$) in our setting. We only use the electric field data of the H-plane to calculate the coupling matrices. 
Such a simplification facilitates the measurements, which can be done easily in an anechoic chamber with a rotating platform.  Even with only part of the electric field data, the effectiveness of this method has been proven in simulation and practice. 

First consider a z-directed dipole antenna located at the origin with current field $\mathbf{J}$, the Green's function to deduce the electric field radiated by this antenna is written as \cite{harrington_time-harmonic_2001}
\begin{align}\label{GF}
\mathbf{G}(\mathbf{r})=-j \frac{\eta}{2 \lambda}\left(\mathbf{I}_3+\frac{1}{k^2} \nabla \nabla^T\right) \frac{\mathrm{e}^{-i k \|\mathrm{r}\|_2 }}{\|\mathrm{r}\|_2},
\end{align}
where $\mathbf{I}_3$ is an identity matrix of size $3\times 3$, $\nabla=\left[\frac{\partial}{\partial x} , \frac{\partial}{\partial y}, \frac{\partial}{\partial z}\right]^T$ is the gradient operator, $\mathbf{r}$ is the coordinate of the far-field region. The electric field can thus be expressed as
\begin{align}\label{Er}
    \mathbf{E}(\mathbf{r})=\int_\Omega \mathbf{G}\left(\mathbf{r}-\mathbf{r}^{\prime}\right) \mathbf{J}\left(\mathbf{r}^{\prime}\right) \mathrm{d} \mathbf{r}^{\prime},
\end{align}
where $\Omega$ is the volume of the antenna. It can be found from \eqref{GF} and \eqref{Er} that  when the antenna is moved to $[0,d,0]$, the variation of the radiated electric field is caused by the term $\frac{\mathrm{e}^{-i k \|\mathrm{r}\|_2 }}{\|\mathrm{r}\|_2}$ of \eqref{GF}. The Green's function is changed to
\begin{align}\mathbf{G}(\mathbf{\bar{r}})=-j \frac{\eta}{2 \lambda}\left(\mathbf{I}_3+\frac{1}{k^2} \nabla \nabla^T\right) \frac{\mathrm{e}^{-i k \|\mathrm{r-d\sin\theta\sin\phi}\|_2 }}{\|\mathrm{r-d\sin\theta\sin\phi}\|_2}.\end{align}

It is obvious that both the amplitude and phase of the electric field have been changed. However, since the far-field is the considered application scenario,  the antenna spacing satisfies $d\ll \|\mathrm{r}\|_2$, which indicates that $\|\mathrm{r-d\sin\theta\sin\phi}\|_2\approx \|\mathrm{r}\|_2$. Hence, the change of far-field radiation amplitude due to the variety of antenna positions in the array is negligible, which simplifies our measurement scheme.

Consider a uniform linear antenna array with $M$ antennas, the position of each antenna is sequentially numbered with \#1, \#2, $\cdots$, \#$M$. The impedance coupling matrix  of the antenna array, i.e., the $\mathbf{Z}$ matrix in \eqref{impedance_matrix}, is first calculated. The specific measurement method is to first place a dipole antenna at the center of the rotating platform to measure its far-field amplitude pattern ${\Lambda_s}(\phi)$, where $\phi$ varies from $-180^\circ$ to $180^\circ$. Next, this antenna is placed at \#1 and its phase pattern $\Psi_{s_1}(\phi)$ is measured. Similarly, this antenna is then placed at \#2, \#3, $\cdots$, \#$M$, and the corresponding phase patterns $\Psi_{s_2}(\phi)$, $\Psi_{s_3}(\phi)$, $\cdots$, $\Psi_{s_M}(\phi)$ are measured. According to \eqref{zmn} and the fact that $\theta=90^o$, the $i$-th row and $j$-th column of the impedance coupling matrix $\mathbf{Z}$ can be calculated as
\begin{equation}
  \begin{aligned}\label{zij}
    {{z}_{ij}}=\sum\limits_{\phi }{{{(\Lambda_s(\phi ))}}}{{e}^{j\cdot \Psi_{s_i}(\phi )}}{{e}^{-j\cdot \Psi_{s_j}(\phi )}}.
\end{aligned}
\end{equation}

Next, the field coupling matrix $\mathbf{C}$ of the antenna array will be calculated. First, $M$ antennas are placed at \#1, \#2, $\cdots$, \#$M$ on the antenna mount respectively.  The antenna selection vector $s_n$ (where the $n$-th element is 1 and the other elements are 0) is applied  to excite the antenna array. To measure the amplitude pattern $\Lambda_{c_n}$ and phase pattern $\Psi_{c_n}$ of the $n$-th antenna in the array, the antenna selection vector is set to $s_n$.
Note that the measured amplitude pattern $P_r(\theta,\phi)$ is the power density of electric field $E(\theta,\phi)$, and the relation between them is
\vspace{-0.1cm}
\begin{align}\label{PnE}
P_r(\theta,\phi)=\frac{1}{2} \Re e\left[\mathbf{E} \times \mathbf{H}^*\right]= \frac{1}{2\eta} \left|E(\theta,\phi)\right|^2.
\end{align}
Thus, the information of the radiated electric field can be extracted by  \eqref{PnE}, the measured amplitude pattern, and the  phase pattern. The electric field of the $n$-th antenna thus is  
\begin{align}
    \mathbf{E}_{c_n}(\phi) = 2\eta\sqrt{\Lambda_{c_n}(\phi)}e^{j\cdot \Psi_{c_n}(\phi)}.
\end{align}

As a result, the field coupling matrix and the superdirective beamforming vector are calculated by \eqref{Cmatrix} and \eqref{mutuala}. Moreover, the non-ideal factors, that affect the antenna radiation pattern, have been considered in this practical measurement scheme.  It is noticeable that the coupling matrices are parameters related to the geometry structure of the antenna array. Thus when the array is fabricated, we only need to measure these parameters once.  

In our measurement-based approach, specific factors such as manufacturing tolerances and cable effects are carefully considered, as these factors can significantly alter the antenna radiation field.

\textbf{Manufacturing tolerances}:
\begin{itemize}
    \item \textbf{Impact}: Variations in the manufacturing process can lead to discrepancies in the physical dimensions and electrical properties of antenna elements, resulting in deviations from the intended radiation pattern.
    \item \textbf{Addressing the impact}: In our approach, we measure the actual radiation field of the antenna array, which inherently includes the effects of these manufacturing variations. By capturing the real-world performance of each antenna element, we can optimize the beamforming vectors to account for these discrepancies, ensuring that the final radiation pattern closely matches the desired design despite manufacturing tolerances.
\end{itemize}

\textbf{Cable effects}:
\begin{itemize}
    \item \textbf{Impact}: Cables can introduce phase shifts, losses, and impedance mismatches that affect the signals fed to each antenna element, leading to deviations in the expected radiation pattern.
    \item \textbf{Addressing the impact:} To address these issues, we employ a beamforming control board that is specifically designed to calibrate the antenna array. This control board allows us to adjust the amplitude and phase of the signals at each antenna element, compensating for any discrepancies introduced by the cables.
\end{itemize}

By incorporating these real-world factors into our measurement process, we are able to derive beamforming vectors that are tailored to the actual performance of the antenna array. This ensures that the final radiation field produced by the optimized beamforming vectors takes into account all relevant factors, including manufacturing tolerances and cable effects, leading to a more accurate and effective beamforming solution.
\section{Numerical Results}\label{IV}
In order to validate the effectiveness of the proposed superdirective beamforming methods, full-wave simulations and realistic experiments are carried out in this section. We compared our method with Maximum Ratio Transmission (MRT) and traditional superdirective beamforming methods\cite{altshuler2005monopole}\cite{ivrlavc2010toward}.

\subsection{Simulation Results}
 A printed dipole antenna array with a center frequency of 1600 MHz is designed as shown in Fig. \ref{fig:1}, where the width and length of the dipole are $w=1\ mm$ and $l=71.5\ mm$, respectively. The dipole antenna is printed on the FR-4 substrate ($\epsilon_r = 4.47$, $\mu_r=1$ , $\tan\delta= 0.0027$ and the thickness is $0.8\ mm$) of size $12.2\  mm\times78 \ mm $.  The transmit power is 1 W in simulations. The full-wave simulations are conducted in CST Microwave Studio.
\begin{figure}[htbp]
  \centering
  \includegraphics[width=2in]{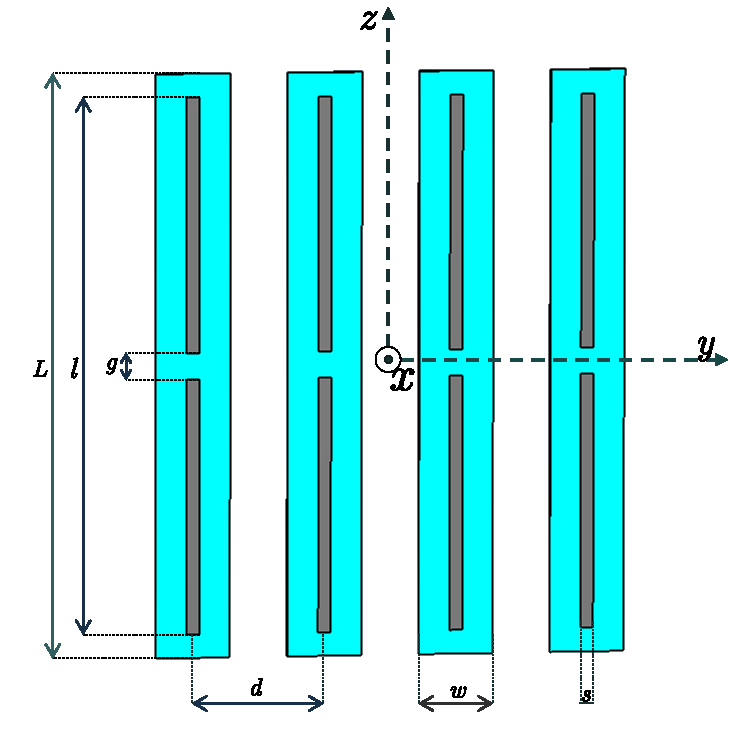}
  \caption{The schematic view of the printed dipole antenna array.}\label{fig:1}
\end{figure}

In conventional precoding strategies that ignore impedance coupling, MRT is considered to be the optimal beamforming vector that maximizes signal-to-noise ratio (SNR)\cite{MRT}. Since the impedance coupling is not considered, the maximum directivity factor of the MRT
 is proportional to the number of antennas $M$. However, by exploiting the impedance coupling effect between antennas, the maximum directivity factor can be much greater. In the following simulations, we will compare the performance of our proposed superdirectivity beamforming scheme and the MRT.

\addtolength{\topmargin}{0.01in}
The full-wave simulation results of the directivity as a function of the antenna spacing  are illustrated in Fig. \ref{fig:2}, for the case of $M=4$. In this figure,  The theoretical maximum directivity is obtained by simulating a single antenna and obtaining its radiation pattern. We then calculate the maximum directivity using equation \eqref{maxD}. In other words, the theoretical maximum directivity is calculated using numerical software and represents the theoretically maximum achievable directivity of an array composed of this antenna. It can be found that the maximum theoretical directivity increases with the decreasing of the antenna spacing, and the directivity reaches $17.8$ when the spacing is $0.15\lambda$. When the antenna array is directly excited by the beamforming vector calculated by \eqref{aa}, the directivity factor also increases at first as the spacing decreases, yet reaches a maximum around $0.28\lambda$. Afterwards, it starts to decrease with smaller spacing. This is because the smaller the spacing, the greater the field coupling between the antennas and the more pronounced the side effects of the traditional method, which does not take the field coupling into account. However, our proposed double-coupling based method has obvious gains and the directivity of the antenna array is in line with the theoretical value. 
The maximum directivity of MRT is obtained after the array is excited by $\mathbf{e}^*$ calculated by \eqref{vector_e} in full-wave simulation software, and 
It can be seen that the directivity of the MRT is consistently smaller than that of the proposed method or even the traditional method for antenna spacing less than $0.5\lambda$. When the spacing is around $0.5\lambda$, it can be found that the four curves almost overlap, which is because the coupling effects can be ignored under such antenna spacing.

\begin{figure}[htbp]
  \centering
  \includegraphics[width=3.5in]{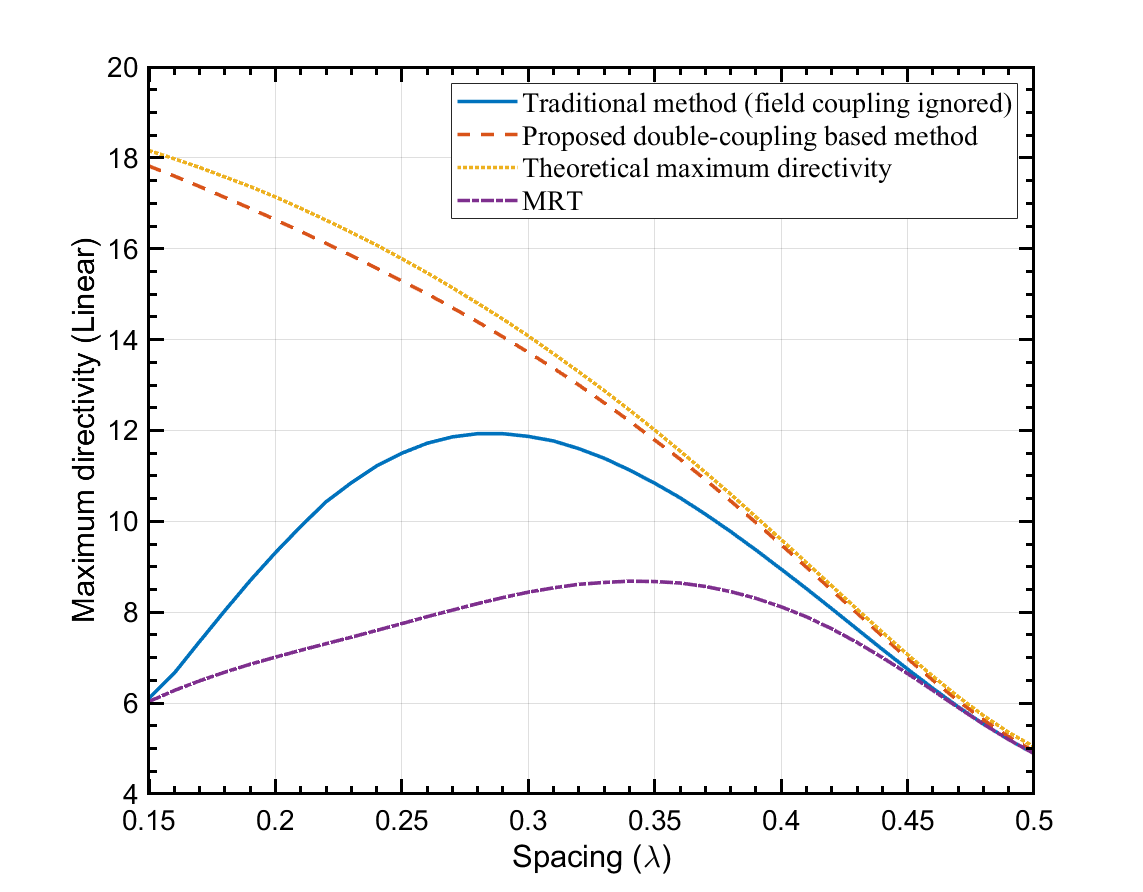}  \caption{The maximum directivity of a four-element dipole antenna array.}\label{fig:2}
\end{figure}
\begin{figure}[htbp]
  \centering
  \includegraphics[width=3.3in]{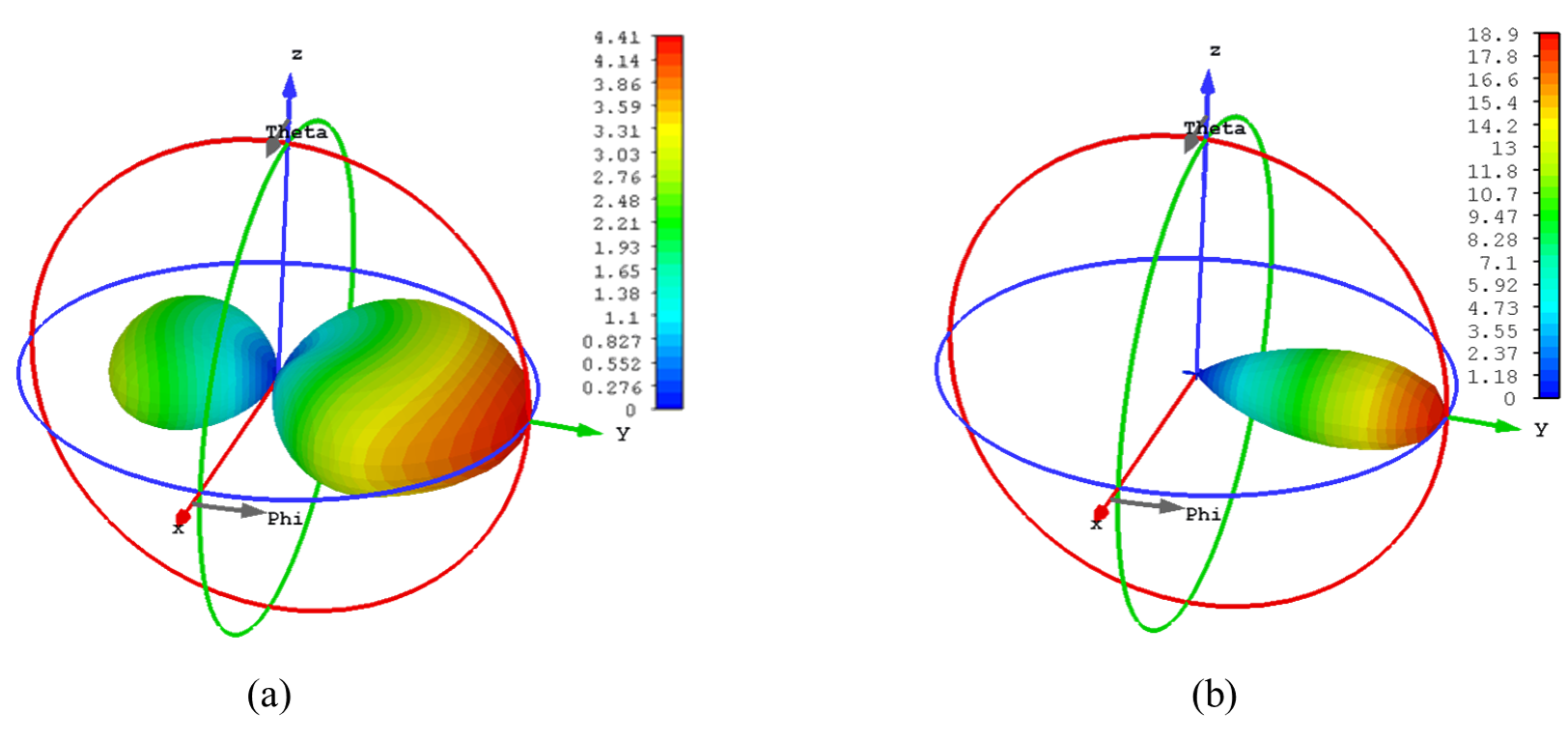}
  \caption{The radiation patterns of the linear dipole antenna arrays, where $M=4$, $d=0.1\lambda$. (a) Excited by the traditional superdirective beamforming method. (b) Excited by the proposed method.}\label{fig:4}\label{3D}
\end{figure}

Fig. \ref{3D} shows the 3D directivity pattern of the linear dipole array with four antennas. The antenna spacing is 0.1$\lambda$. Fig. \ref{3D} (a) and Fig. \ref{3D} (b) are the patterns when the beamforming vector is calculated by the traditional method that ignores the field coupling and by our proposed double-coupling based method respectively. Compared with Fig. \ref{3D} (a), it is evident that the radiated pattern is narrower and the back lobe is smaller in  Fig. \ref{3D} (b). 
It can be seen that the directivity factor of the traditional method is only 3.4, while the theoretical value is 18.9, which is much larger. 
When the proposed beamforming method is applied, the directivity is close to the theoretical value.

Fig. \ref{8antennas} shows the directivities for a eight-dipole antenna array. 
It can be seen that the directivity based on the traditional method has a poor performance in the small spacing region, while our proposed method is close to the theoretical values. 
 The performance degradation of the traditional method at smaller antenna spacings is primarily due to the increased electromagnetic field coupling between the antennas. As the spacing between antennas decreases, the mutual coupling becomes more significant. This mutual coupling is represented by the coupling matrix \(\mathbf{C}\). When the antennas are closely spaced, \(\mathbf{C}\) deviates more from the identity matrix, indicating stronger interactions between the antennas.
These deviations impact both the amplitude and phase of the beamforming vectors, which are critical in shaping the desired beam pattern. As a result, the intended beam pattern is distorted, leading to performance degradation. In essence, the closer the antennas are, the more they interfere with each other, which undermines the effectiveness of the beamforming process.

In both cases, the proposed method, which accounts for mutual coupling effects, significantly outperforms the MRT method, especially as the antenna spacing decreases. This is evident from the maximum directivity (measured in linear units), where the MRT method's performance degrades more rapidly as antenna spacing decreases compared to the proposed method. The practical implications of these findings are significant. The MRT design does not consider the impact of impedance coupling that becomes prominent as antenna spacing decreases. This oversight leads to suboptimal performance compared to the superdirectivity method that incorporates coupling effects. Therefore, in practical compact antenna systems, especially where space constraints lead to reduced antenna spacing, it is crucial to account for mutual coupling in the system design to achieve optimal performance.

Fig. \ref{limited_angles_pattern} compares the directivity patterns for three different methods applied to an antenna array with 8 antennas spaced at 0.3$\lambda$ apart. The Proposed method (full field data), shown in solid blue, uses field data from all angles. The Proposed method (four angles data), represented by the dashed orange line, only uses field data from 4 specific angles to compute the field coupling matrix. Finally, the traditional method (dotted red line) ignores the field coupling effect, leading to notable differences in the directivity pattern. It can be observed that using only four angle values to compute the coupling matrix 
$\mathbf{C}$ achieves almost the same results as using full-field data. This validates the accuracy of our proposed Corollary \ref{coro1}.

Fig. \ref{deltaDdeltaP} demonstrates the relationship between antenna spacing and beamforming accuracy of a four-dipole array, highlighting how decreasing the spacing between antennas enhances the field coupling effects. This increase in coupling leads to a greater deviation from the expected performance in beamforming.
The left y-axis shows \(\Delta D\), which represents the difference between the actual directivity and the theoretical directivity. As antenna spacing decreases, \(\Delta D\) increases, indicating that the actual directivity becomes less accurate compared to the theoretical model.
The right y-axis displays \(\Delta F\), representing the logarithmic mean squared error (MSE) between the theoretical and actual radiation patterns. Similar to \(\Delta D\), \(\Delta F\) increases as the antenna spacing decreases, reflecting greater inaccuracies in the radiation pattern due to the stronger mutual coupling effects. This illustrates that as the antennas are placed closer together, the beamforming precision deteriorates, resulting in higher discrepancies from the desired radiation characteristics.

A table summarizing the key performance metrics, including beamwidth, peak side lobe level (PSLL), and the directivity is provided. Table \ref{tab:comparison} summarizes the performance metrics for each method and configuration, facilitating an easier comparison and highlighting the advantages of the proposed method.

\begin{table*}[h!]
    \centering
    \caption{Comparison of Beamwidth, PSLL, and Directivity for Different Methods and Antenna Configurations}
    \centering
    \begin{tabular}{lccc|ccc}
        \toprule
        \textbf{Method} & \multicolumn{3}{c|}{\textbf{4 Antennas (0.3$\lambda$)}} & \multicolumn{3}{c}{\textbf{8 Antennas (0.2$\lambda$)}} \\
        \midrule
        & \textbf{Beamwidth} & \textbf{PSLL} & \textbf{Directivity} & \textbf{Beamwidth} & \textbf{PSLL} & \textbf{Directivity} \\
        \midrule
        \textbf{MRT} & 83$^\circ$ & -17.6 dB & 8.7 & 58.4$^\circ$ & -8.9 dB & 11.7 \\
        \textbf{Traditional} & 51.1$^\circ$ & -4.6 dB & 11.6 & 87$^\circ$ & -6 dB & 6.1 \\
        \textbf{Proposed} & 55.8$^\circ$ & -8.48 dB & 14 & 24.5$^\circ$ & -16.2 dB & 61.58 \\
        \bottomrule
    \end{tabular}
    \label{tab:comparison}
\end{table*}

\begin{figure}[htbp]
  \centering
  \includegraphics[width=3.5in]{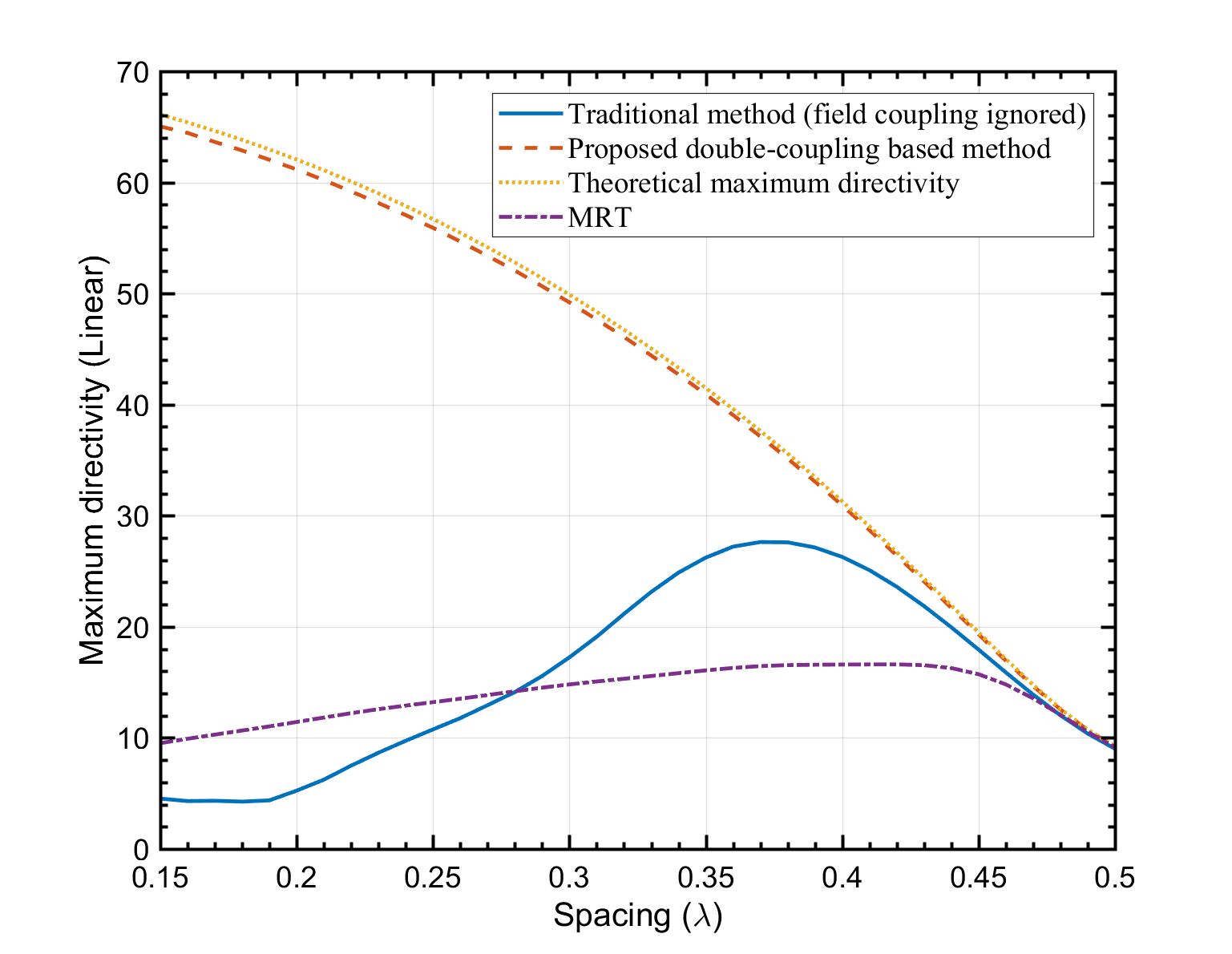}
  \caption{The maximum directivity of an eight-element dipole antenna array.}\label{8antennas}
\end{figure}

\begin{figure}[htbp]
  \centering
  \includegraphics[width=3.5in]{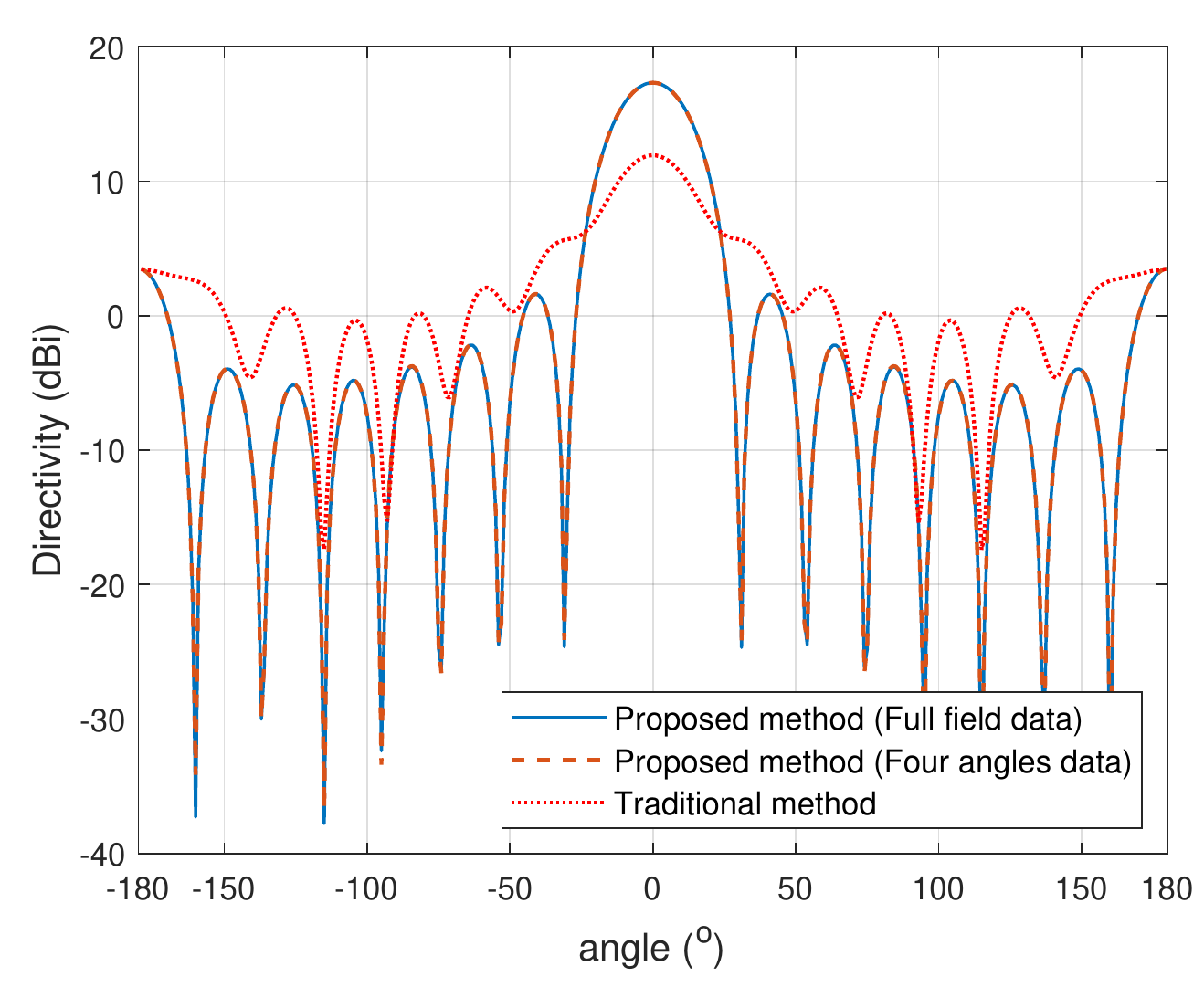}
  \caption{Directivity patterns comparison for different methods with 8 antennas (0.3$\lambda$ Spacing)}\label{limited_angles_pattern}
\end{figure}

\begin{figure}[htbp]
  \centering
  \includegraphics[width=3.2in]{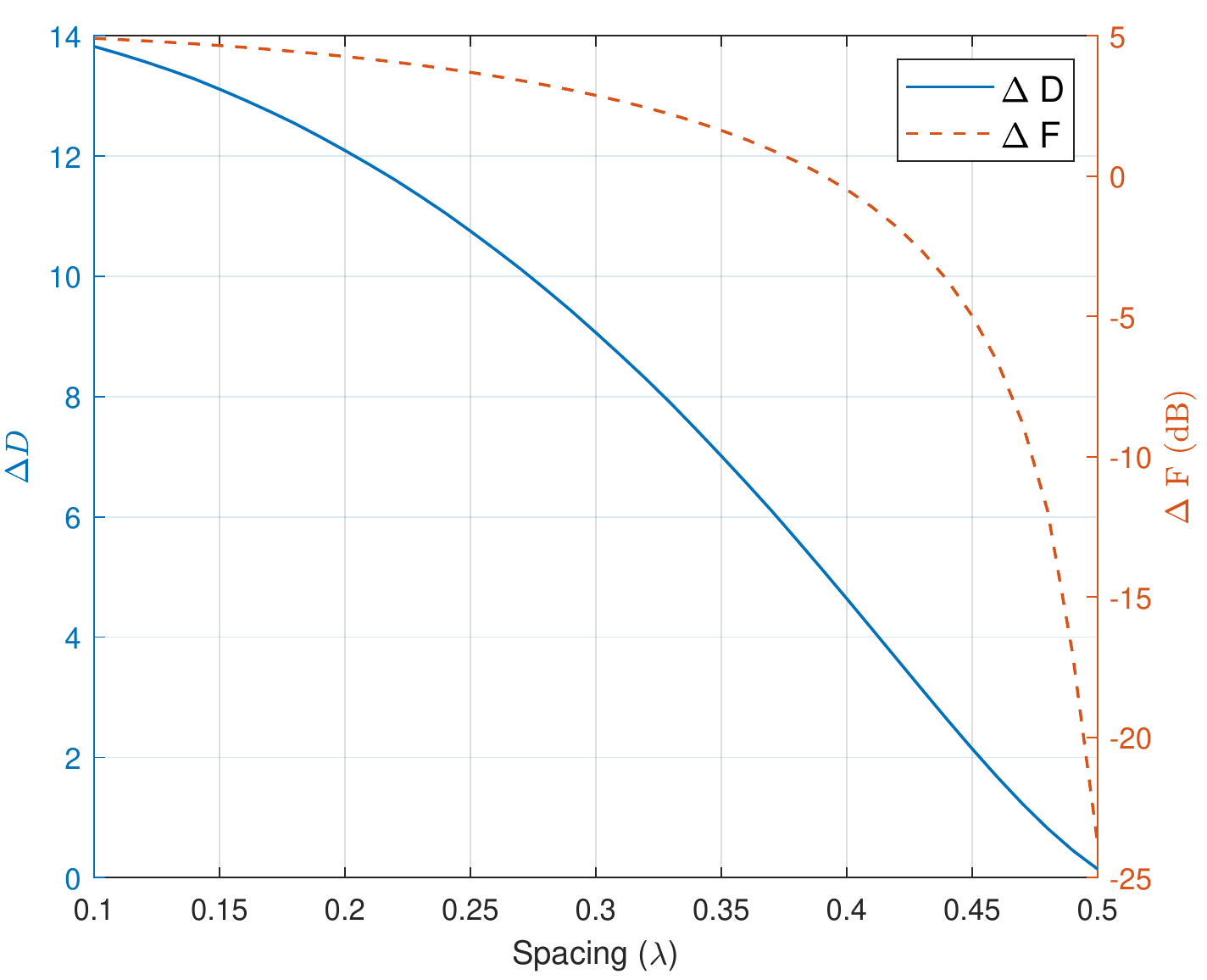}
  \caption{Increasing beamforming inaccuracy due to enhanced field coupling with decreasing antenna spacing}\label{deltaDdeltaP}
\end{figure}
To analyze the impact of ohmic loss on the performance of directivity, the gain of the antenna array with different spacing is illustrated in Fig. \ref{fig:5}. When the number of antennas is 2 and 4 respectively,  the radiation efficiency of the antenna is 96\%. It can be found the gain does not keep increasing as the antenna spacing decreases, and the larger the number of antennas, the more obvious the effect of ohmic loss at smaller spacing. Consequently, the maximum gain is 9.6 when the number of antennas is 4 and the spacing is 0.33$\lambda$. This phenomenon can be explained as follows.
\begin{figure}[tbp]
  \centering
  \includegraphics[width=3in]{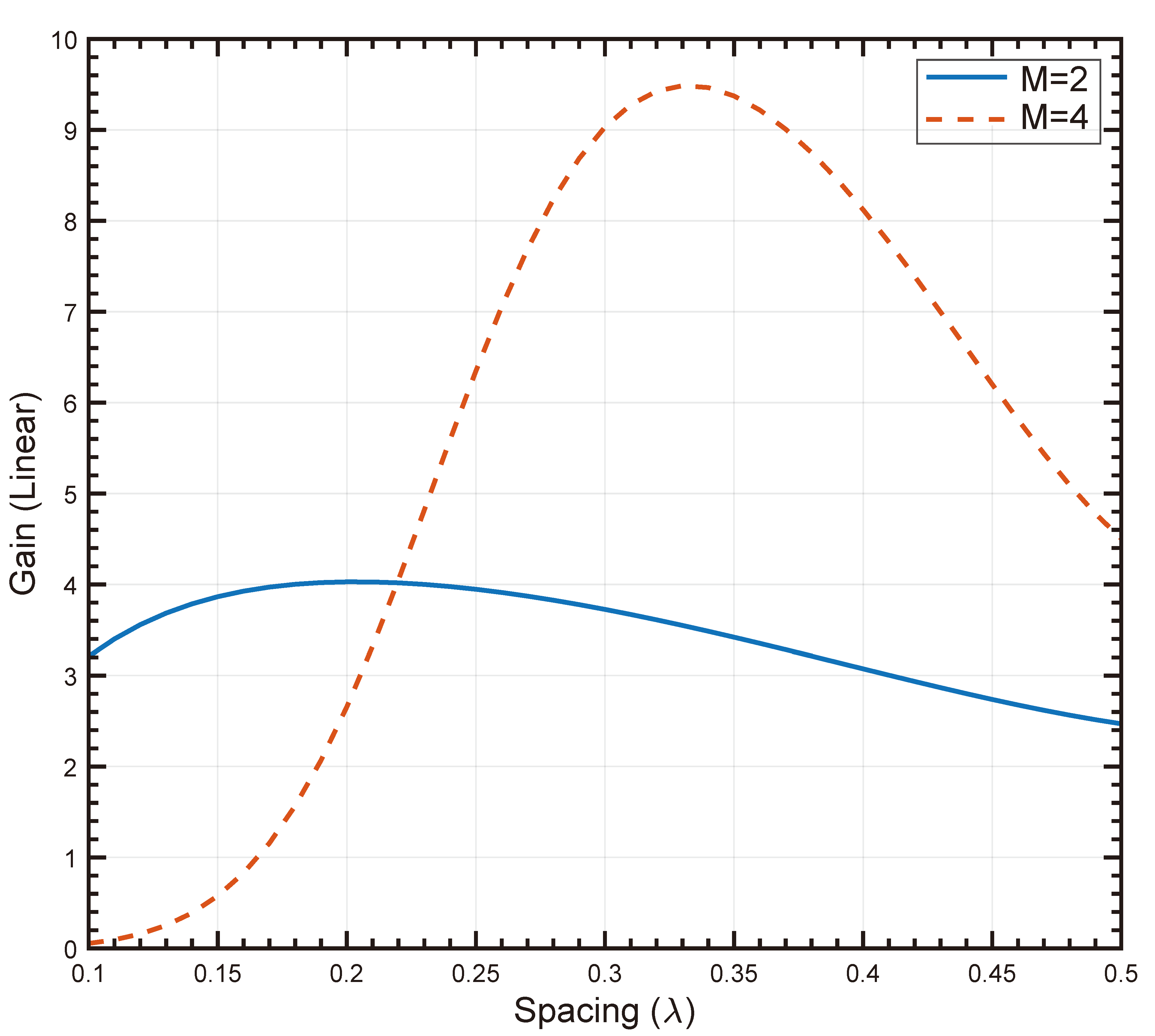}
  \caption{The gain achieved by the proposed method in the presence of the ohmic loss. }\label{fig:5}
\end{figure}

We analyze the expression \eqref{gain} for array gain considering ohmic losses. To simplify the notation, we define $\bar{\mathbf{b}}=\mathbf{Cb}$. Thus, \eqref{gain} can be rewritten as:
\begin{align}
    G=\frac{\bar{\mathbf{b}}^T \mathbf{e e}^H\bar{\mathbf{b}}^*}{\bar{\mathbf{b}}^T\left(\mathbf{Z}+r_{\text {loss }} \mathbf{I}_M\right)\bar{\mathbf{b}}^*}.
\end{align}

According to the physical meaning of $G$, the denominator of this expression can be split into two terms: the radiated power $P_{\rm rad}$ and the loss power $P_{\rm loss}$:
\begin{align}
P_{\rm rad} = \bar{\mathbf{b}}^T\mathbf{Z}\bar{\mathbf{b}}^*,
\end{align}
\begin{align}
P_{\rm loss}=\bar{\mathbf{b}}^Tr_{\rm loss}\mathbf{I}_M\bar{\mathbf{b}}^*.
\end{align}

Without loss of generality, we ignore the power constraint coefficient $\zeta$ for the superdirective beamforming vector, so that $\bar{\mathbf{b}}=\mathbf{Z}^{-1}\mathbf{e}^*$.
Since the impedance coupling matrix $\mathbf{Z}$ of the dipole antenna is a real symmetric matrix, it can be diagonalized as:
\begin{align}
\mathbf{Z} = \mathbf{U}\mathbf{\Lambda}\mathbf{U}^H,
\end{align}
where $\mathbf{U}$ is a unitary matrix. Therefore,
\begin{align}
P_{\rm rad} &= \mathbf{e}^H\mathbf{Z}^{-1}\mathbf{e}\\ \notag
&=(\mathbf{U}^H\mathbf{e})^H\mathbf{\Lambda}^{-1}(\mathbf{U}^H\mathbf{e})\\\notag
&=\sum_{i=1}^M\frac{1}{\lambda_i}\mathbf{w}^H[i]\mathbf{w}[i],\notag
\end{align}
\begin{align}
P_{\rm loss} &= r_{\rm loss}\mathbf{e}^H\mathbf{Z}^{-1}\mathbf{Z}^{-1}\mathbf{e}\\\notag
&=(\mathbf{U}^H\mathbf{e})^Hr_{\rm loss}\mathbf{\Lambda}^{-2}(\mathbf{U}^H\mathbf{e})\\\notag
&=\sum_{i=1}^M\frac{r_{\rm loss}}{\lambda_i^2}\mathbf{w}^H[i]\mathbf{w}[i],\notag
\end{align}
where $\mathbf{w}=\mathbf{U}^H\mathbf{e}$ and $\mathbf{w}[i]$ is the $i$-th element of $\mathbf{w}$. When the spacing between antennas is relatively large (around $0.5\lambda$), the impedance matrix $\mathbf{Z}$ tends to approach the identity matrix. In this scenario, with no small eigenvalues in $\mathbf{Z}$ and low ohmic loss $r_{\rm loss}$, the gain gradually increases as the spacing decreases. However, this trend stops when the spacing is reduced to a certain point. As the antenna spacing decreases, the impedance coupling matrix $\mathbf{Z}$ tends to become singular, i.e., some of its eigenvalues approach zero, causing the magnitude of $P_{\rm loss}$ to become comparable to or even greater than that of $P_{\rm rad}$, resulting in a decrease in gain.
\begin{table}[]
\caption{Directivity and realized gain with matched antenna array}
\centering
\label{dir_rg}
\begin{tabular}{ccc}
\hline
Array configuration & Directivity & Realized gain \\ \hline
$M=4,d=0.3\lambda$  & 14          & 13.9          \\ \hline
$M=4,d=0.2\lambda$  & 16.6        & 16            \\ \hline
\end{tabular}
\end{table}

Several practical methods can be suggested to mitigate ohmic losses in real-world implementations, especially in the context of the proposed beamforming system:

\textbf{ Advanced materials:}
\begin{itemize}
    \item  \textbf{Superconductors}: Recently, superconducting materials have gained attention due to their near-zero electrical resistance at specific low temperatures. Using superconductors to fabricate antennas and transmission lines can significantly reduce ohmic losses. This reduction is due to the elimination of resistive heating, leading to improved efficiency and power handling capabilities in the beamforming system.
    \item \textbf{Low-resistivity alloys}: Since superconductors are not feasible currently, using low-resistivity materials such as silver, copper alloys, or graphene could reduce the resistive losses compared to traditional copper. These materials can be strategically applied to critical components to enhance overall system efficiency.
\end{itemize}

\textbf{ Cooling techniques:}
\begin{itemize}
    \item \textbf{Cryogenic cooling}: By placing the system in a cryogenic environment, where temperatures are significantly lower, the resistivity of conductive materials decreases. This reduction in resistivity directly translates to lower ohmic losses. 
\end{itemize}

\textbf{ Integration into the proposed beamforming system:}
\begin{itemize}
    \item \textbf{Superconducting components}: Superconducting materials could be integrated into the antennas and critical transmission lines of the beamforming system. This integration would involve replacing conventional conductors with superconductors. This could lead to a significant reduction in power consumption and an increase in signal strength and clarity.
    \item \textbf{Cryogenic environment}: The entire beamforming array could be housed within a cryogenic environment, ensuring that all components operate at reduced resistivity. This approach would require advanced thermal management and insulation techniques to maintain the low temperatures required for superconductivity or to achieve significant reductions in resistivity in non-superconducting materials.
\end{itemize}

 In practice, the realized gain is also an important parameter of the antenna array that accounts for both the ohmic loss and mismatch between sources and antennas. In a compact antenna array, the input impedance of antennas can change due to strong mutual coupling, leading to a decrease in the realized gain. This issue can be mitigated by designing an appropriate matching network. To demonstrate the efficacy of the impedance matching, we here consider the perfect electric conductor (PEC) as the material of antennas to exclude the  ohmic loss in simulations.
Table \ref{dir_rg} presents the directivity and realized gain of a matched antenna array obtained through electromagnetic simulation. It is observed that the realized gain can closely approach the directivity. The design of the matching networks is out of the scope of this paper, and will be studied in our future works. 
\subsection{Experimental Results}

\begin{figure}[htbp]
  \centering
  \includegraphics[width=2.6in]{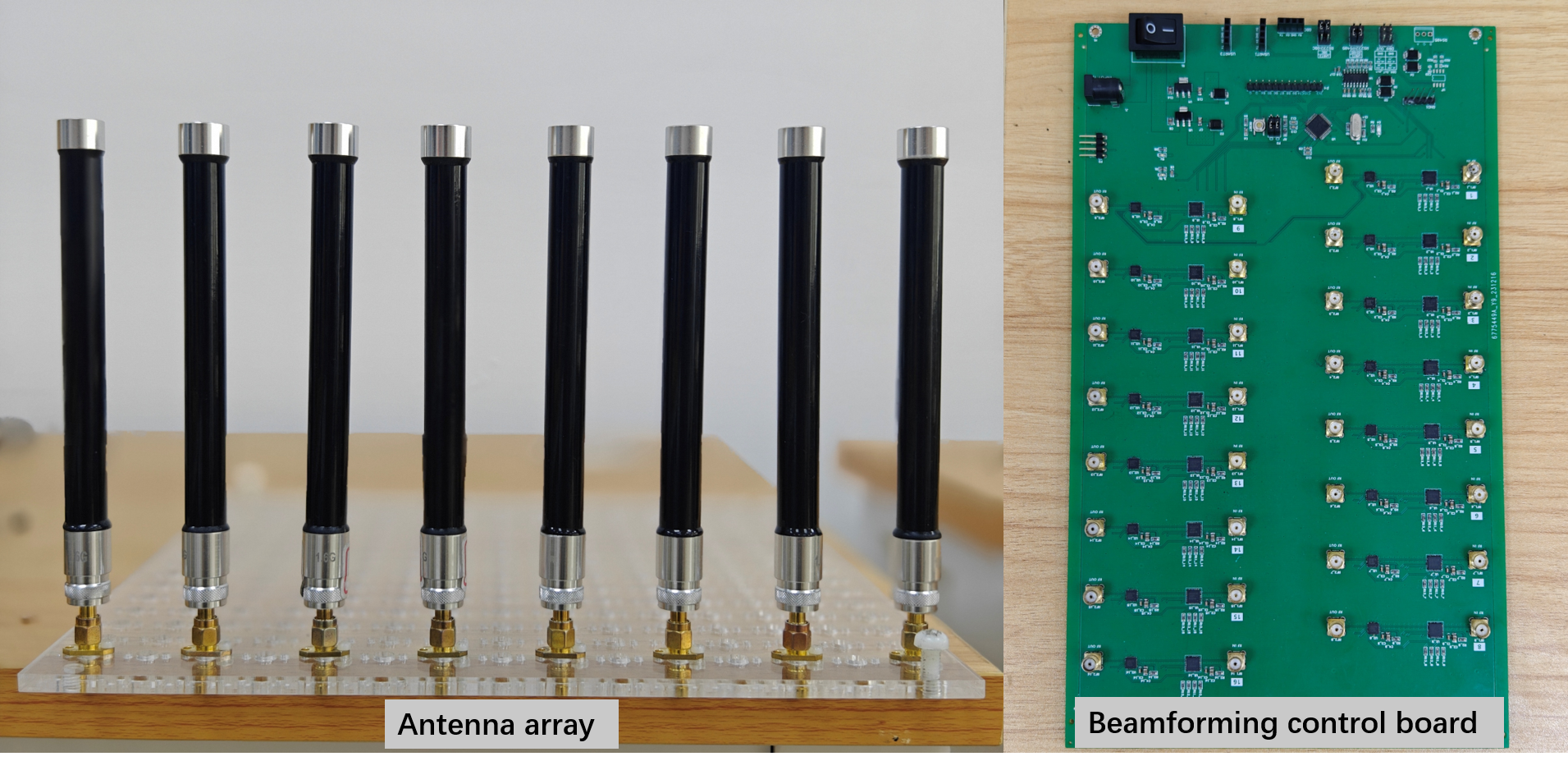}  \caption{The  antenna array and beamforming control board. }\label{bfcb}
\end{figure}

\begin{figure}[htbp]
  \centering
  \includegraphics[width=2.6in]{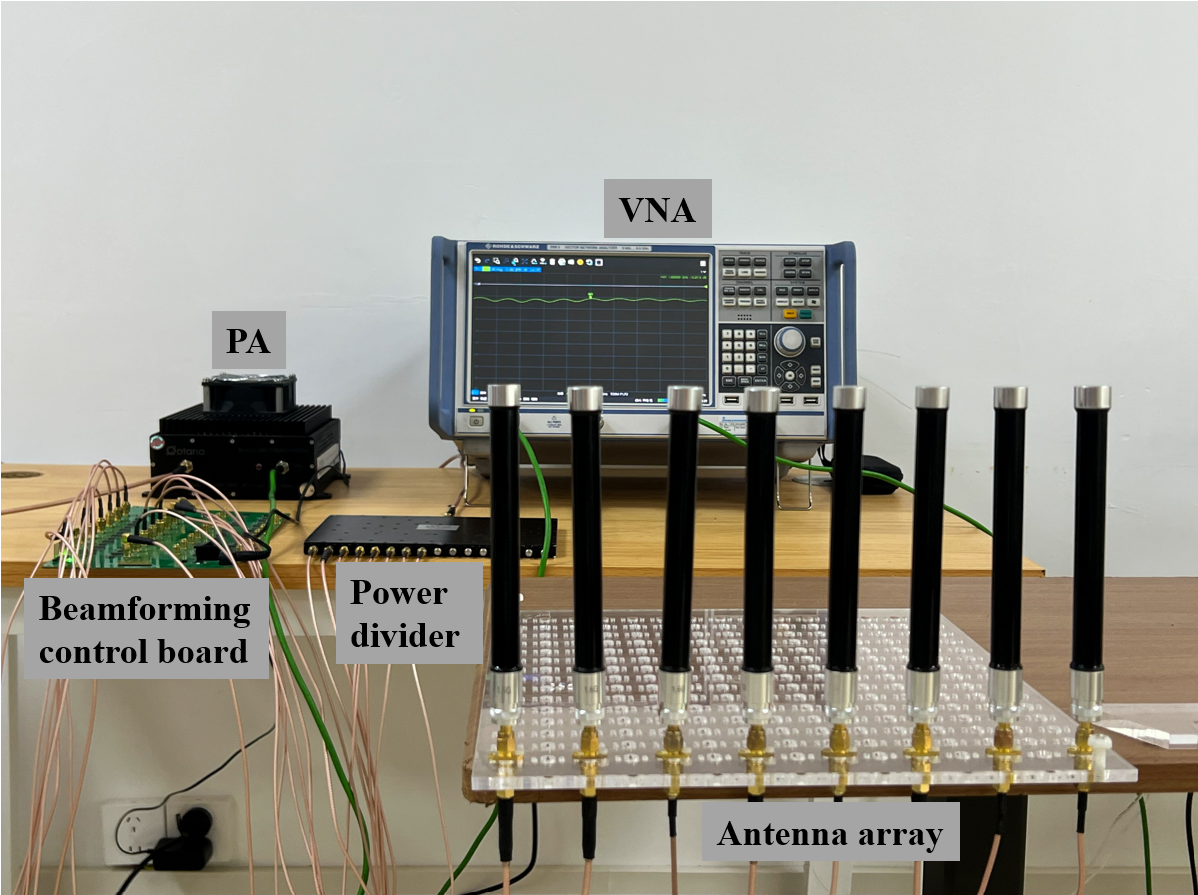}  \caption{The hardware platform of the superdirective antenna array testing system. }\label{tsd}
\end{figure}
 \begin{figure}[htbp]
  \centering
  \includegraphics[width=2.6in]{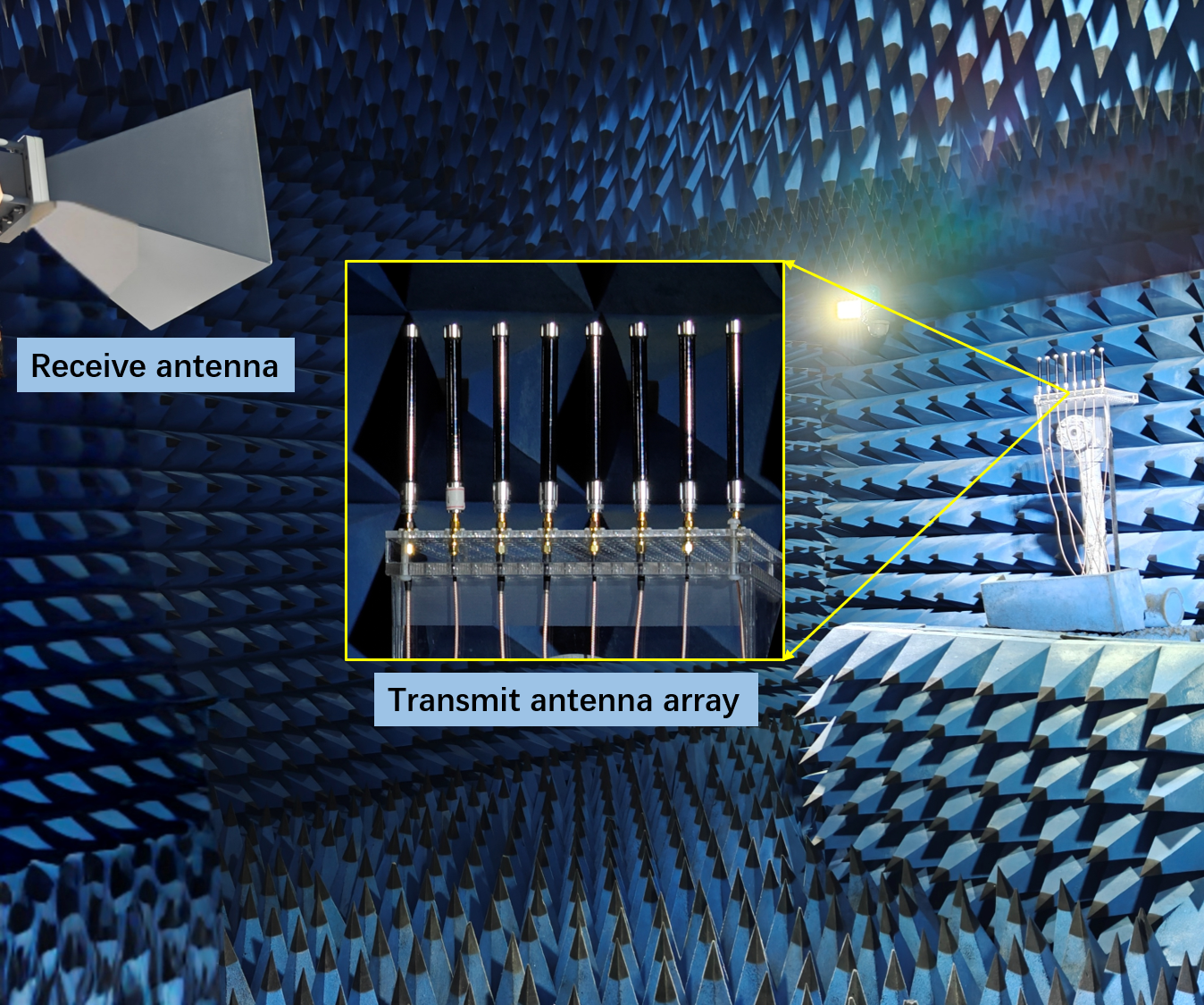}  \caption{The microwave anechoic chamber of size 4 m × 6 m × 4 m. The Array-RX antenna distance is 4.8 m. }\label{fig:6}
\end{figure}
To validate the proposed beamforming method in practice, we design and fabricate a prototype of the superdirective antenna array. 
The dipole antenna is selected as the radiating element of the array, and its parameters are consistent with the simulation settings. Each antenna is connected to a coaxial feed line with a SubMiniature version A (SMA) connector.

In order to excite each antenna with software-controlled phase and amplitude, we develop a beamforming control board that has 16 channels, and the beamforming coefficient for each channel is quantized to 8 bits in phase and 7 bits in amplitude.  
\begin{figure}[htbp]
  \centering
  \includegraphics[width=3in]{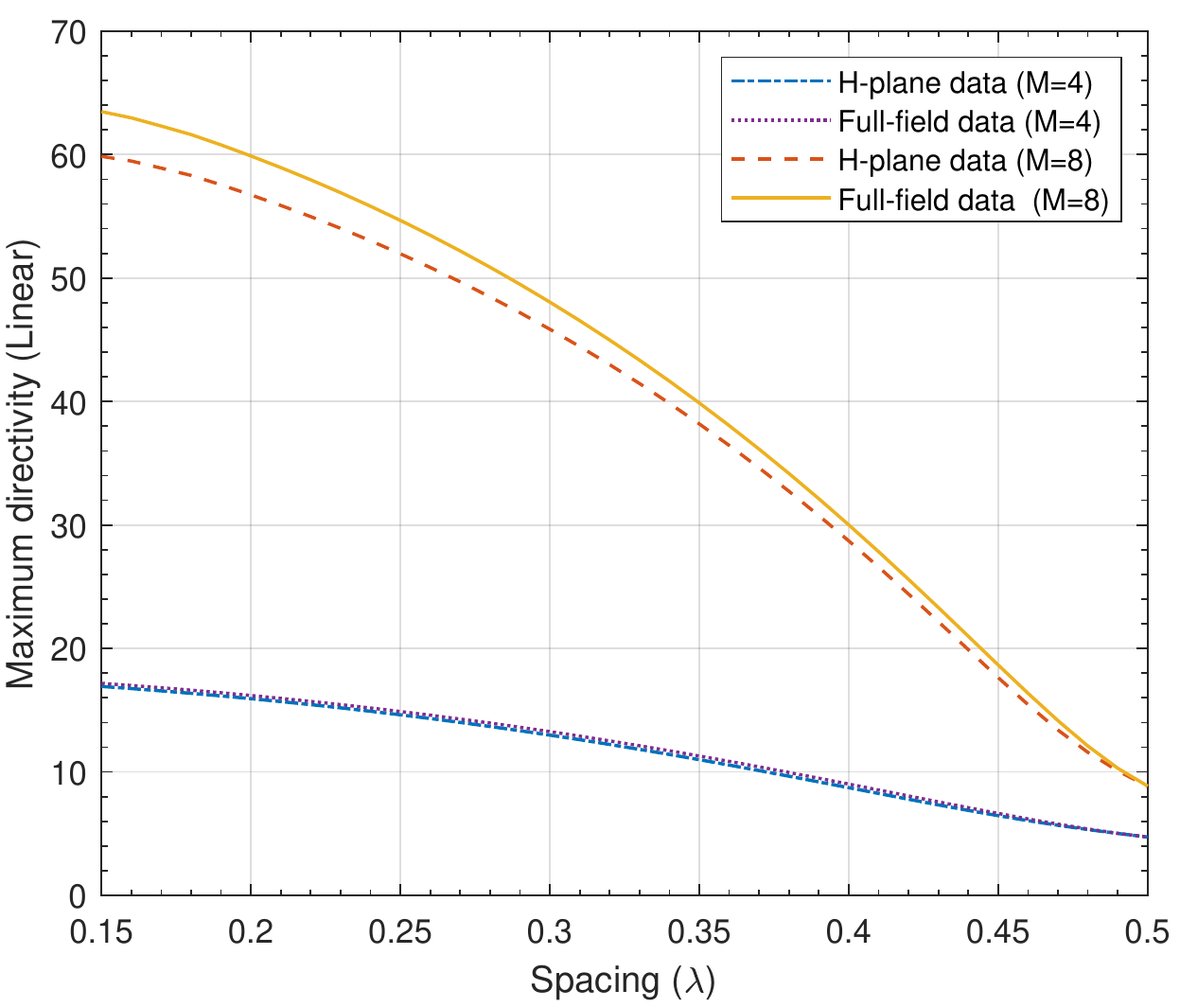}
  \caption{Comparison of beamforming performance using full-field data vs. H-Plane field data.}\label{differentZ}
\end{figure}
\begin{figure*}[htbp]
\centering
\begin{minipage}{\textwidth}
\centering
\subfigure[Comparing with the maximum ratio transmission. ]{
\includegraphics[width=5.3cm]{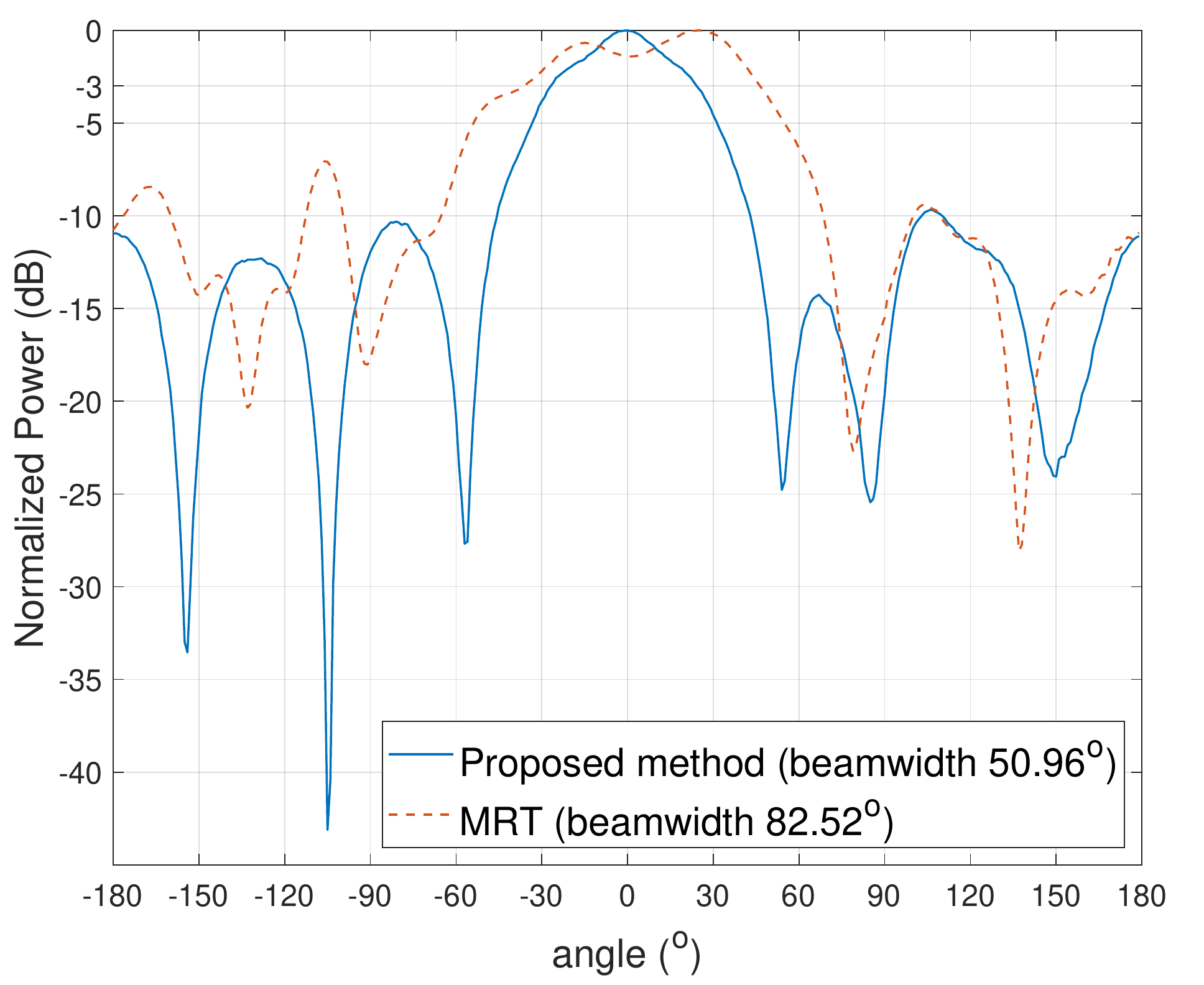}
}
\quad
\subfigure[Comparing with the traditional superdirective beamforming method.]{
\includegraphics[width=5.3cm]{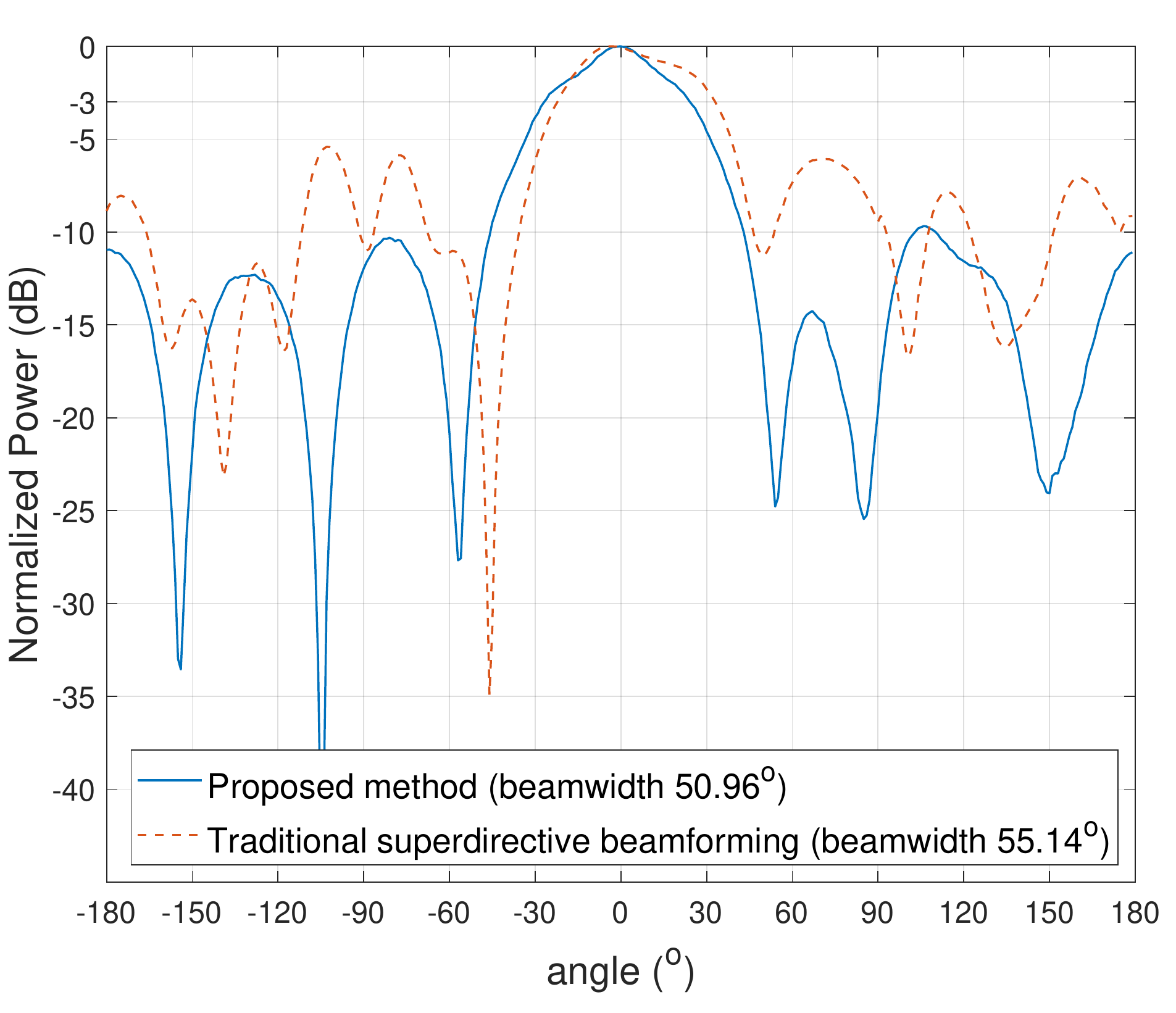}
}
\quad
\subfigure[Comparing with the simulation results.]{
\includegraphics[width=5.3cm]{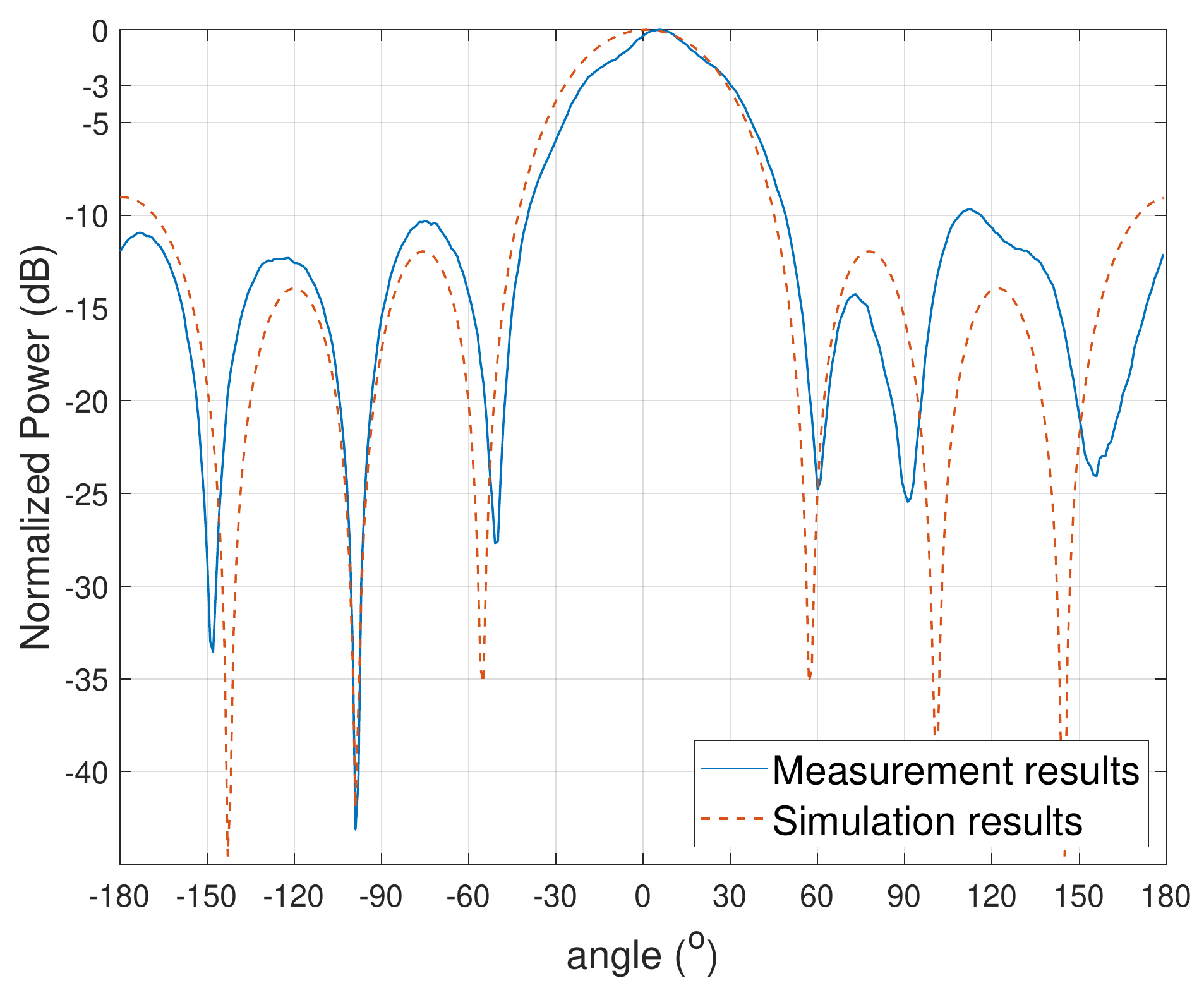}
}
\caption{The measured normalized power of the proposed beamforming method in principal radiation plane ($\theta=90^o$) comparing with (a) the MRT, (b) the traditional superdirective beamforming method (field coupling ignored) and (c) the full-wave simulation results. The number of antennas is 4 and the antenna spacing is 0.3$\lambda$.}\label{R2}
\end{minipage}
\begin{minipage}{\textwidth}
\centering
\subfigure[Comparing with the maximum ratio transmission. ]{
\includegraphics[width=5.3cm]{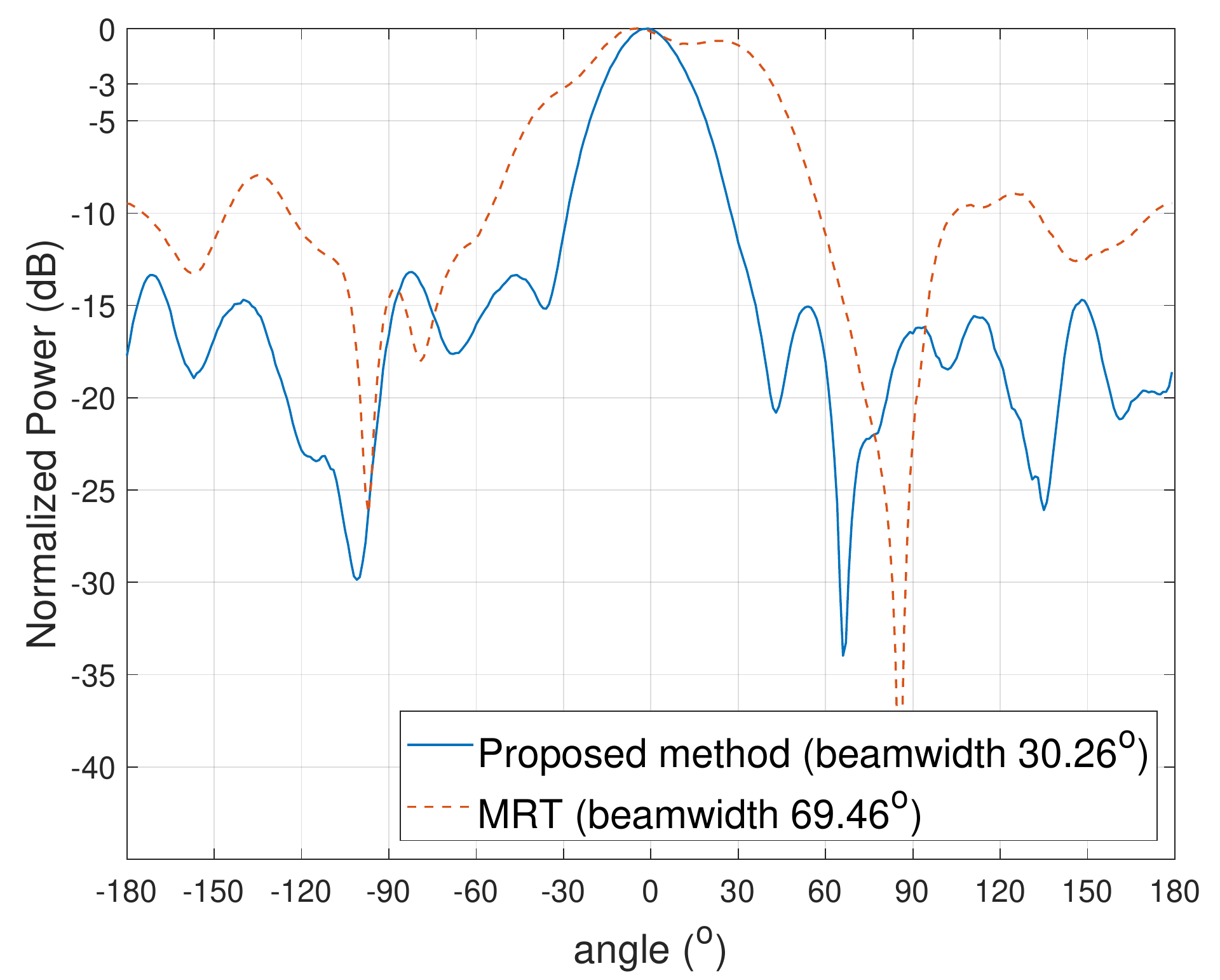}
}
\quad
\subfigure[Comparing with the traditional superdirective beamforming method.]{
\includegraphics[width=5.3cm]{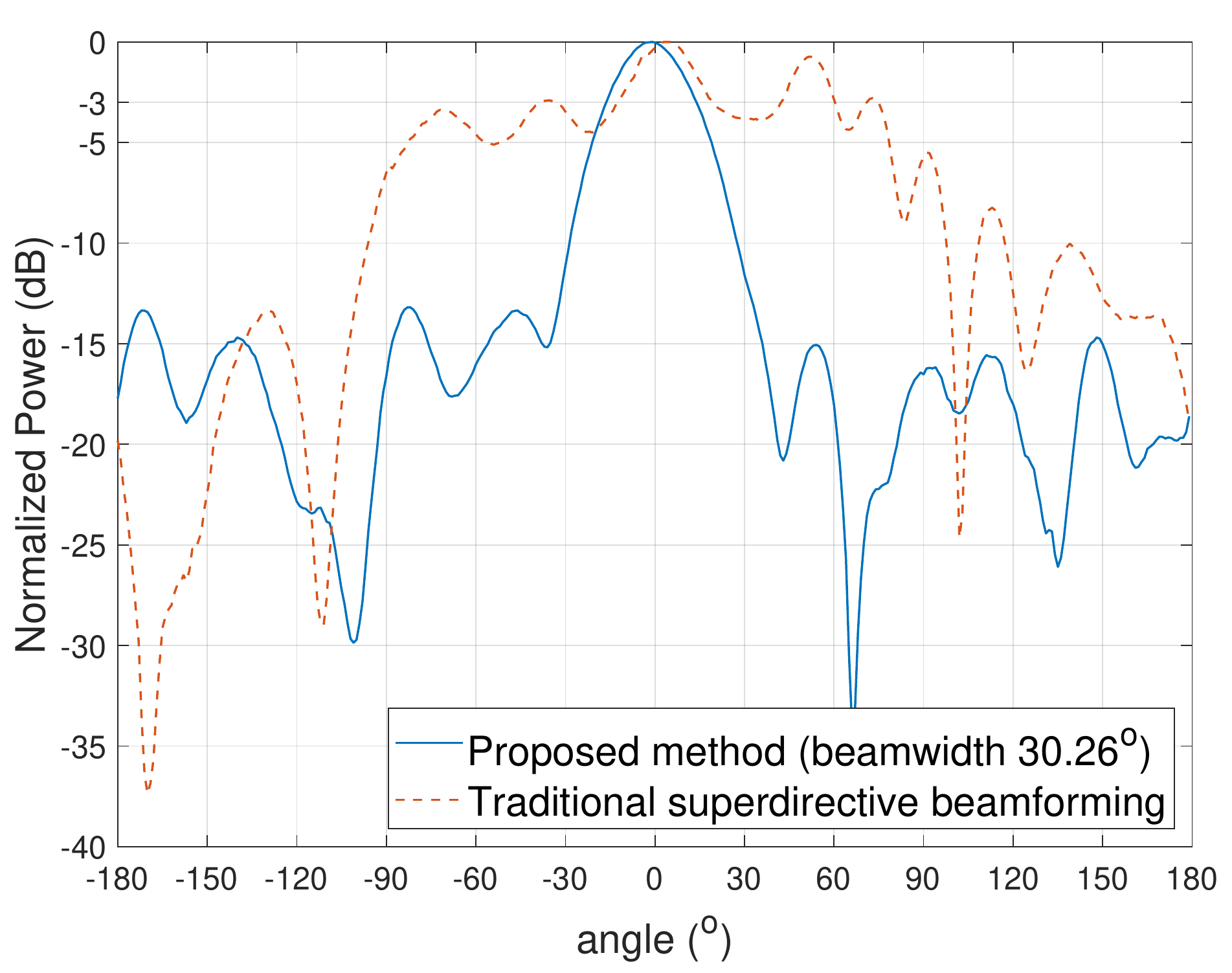}
}
\quad
\subfigure[Comparing with the simulation results.]{
\includegraphics[width=5.3cm]{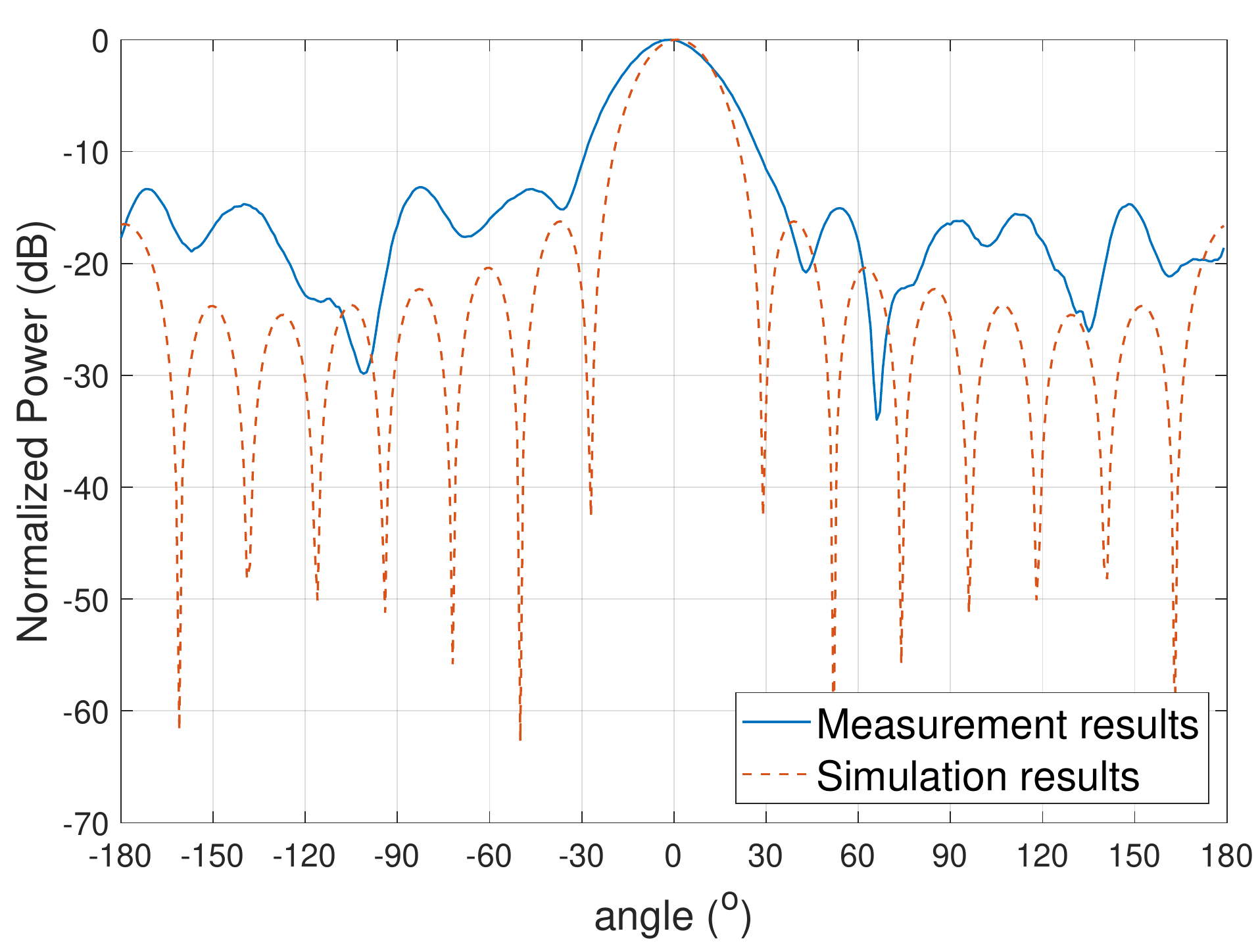}
}
\caption{The measured normalized power in the principal radiation plane ($\theta=90^o$) with 8 dipole antennas and an antenna spacing of 0.2$\lambda$.}\label{R53}
\end{minipage}
\end{figure*}
The  antenna array and beamforming control board are shown in Fig. \ref{bfcb}. A 16-way power divider is utilized to extend the RF signal and its output is connected to the input of the beamforming control board.  Note that some non-ideal factors are inevitable in the hardware platform, such as quantization errors for the beamforming vector. The performance loss due to excitation errors is discussed in \cite{gao2023robust}, wherein we demonstrate a method to enhance the robustness of superdirective beamforming to excitation errors. The antenna array to be measured is connected to the output of the beamforming control board. The initial phase differences between the channels are measured by the vector network analyzer (VNA) and calibrated by the beamforming control board. The diagram of the hardware system is shown in Fig. \ref{tsd}.

 The experiments are carried out in the microwave anechoic chamber, as shown in Fig. \ref{fig:6}. The tested antenna array is fixed to the center pillar of the rotating platform, and the receiving antenna has the same height as the transmitting antenna array. The distance between the antenna array and the receive antenna is 4.8 m, which satisfies the far-field test requirement.

In the experiments, we use a VNA to measure the  far-field amplitude and phase of the antenna under test (AUT), where the VNA model is R\&S ZNB. Specifically, the antenna array is connected to the output port of the beamforming control board, whose input port is linked to the power divider. The input of the power divider is connected to port one of the VNA, and the receiving antenna is connected to port two. The radiation field information of the AUT can thus be obtained by measuring the S21 parameter from the VNA.

In the practical testing environment, due to limitations of mechanical apparatus and testing equipment, we can only obtain the electric field data from the principal radiation plane (H-plane) of the antenna array. Therefore, we will use only these data to calculate the beamforming coefficients. Additionally, using electric field data from a single plane significantly reduces testing time and prevents severe oscillations of the experimental apparatus. To validate this approach, we compared the beamforming results using full-field data and using only single-plane electric field data in simulations, as shown in Fig. \ref{differentZ}. It shows that the directivity performance of the antenna array does not degrade significantly when only the H-plane electric field data is used. This observation can also be qualitatively explained from a mathematical perspective. When only the electric field data from the H-plane is used, the impedance coupling calculation formula can be written as:
\begin{align}
    z^{\rm Pla}_{mn} = \frac{1}{2\pi} \int_{-\pi}^\pi e^{jk(m-n)d\sin\phi} \, d\phi.
\end{align}
Revisiting the impedance coupling calculation method using full-field data \eqref{zmn}, we find that \eqref{zmn} essentially sums the plane impedance couplings across various $\theta$ planes, weighted by the factor $(|g(\theta, \phi)|^2 \sin\theta)$. For dipole antennas, this weighting factor becomes $(\sin\theta)^3$. Given this weighting relationship, it is evident that most of the impedance coupling information is captured when $\theta$ is 90°, allowing for the use of only H-plane electric field data to achieve satisfactory directivity performance. Limiting measurements to the H-plane also introduces certain limitations when analyzing and optimizing antenna performance:
\begin{itemize}
    \item By only measuring the electric field in the H-plane, we effectively simplify the analysis of the antenna's performance in the three-dimensional space. This simplification leads to the omission of electric field components present in the V-plane (vertical plane) and other non-horizontal planes. As a result, the complete spatial distribution of the electric field is not captured, which can lead to an incomplete understanding of the antenna's radiation characteristics. 
    \item The optimization process, when based solely on H-plane measurements, typically aims to maximize the directivity or gain in a specific direction within the H-plane. However, this approach might not ensure that the directivity in this direction is maximized when considering the full three-dimensional space. There is a risk that the antenna's performance might be optimized only for a specific plane, potentially overlooking the behavior of the antenna in other directions or planes. This could lead to suboptimal performance in terms of pattern integrity across the entire space.
\end{itemize}

We compare our proposed beamforming method with the MRT, the traditional superdirective beamforming method (field coupling ignored), and the simulation results. Fig. \ref{R2} shows the corresponding results when the number of antennas is four and the antenna spacing is $0.3\lambda$.  The MRT beamforming vector is obtained by measuring the channel coefficients using VNA. 
It can be found  that the MRT method has the widest 3-dB beamwidth, which equals ${\rm 82.52} ^{\circ}$, in the direction of interest. However, the narrowest 3-dB beamwidth, which equals 50.96$^{\circ}$, is achieved by our proposed beamforming method. From Fig. \ref{R2} (b), we may observe that although the traditional superdirective beamforming method also has a narrower 3-dB beamwidth compared to the MRT, its power level is higher than our method in almost all directions other than the main lobe. 
It can be found from Fig. \ref{R2} (c) that a good agreement is attained in both the main lobe and the side lobes.

Subsequently, the number of antennas was increased to eight, and the test results are shown in Fig. \ref{R53}. In Fig. \ref{R53} (a), it can be seen that with the increased number of antennas, both the proposed method and MRT show a reduction in 3dB beamwidth compared to using four antennas, which is consistent with theoretical expectations. The proposed method still achieves a narrower beamwidth. In Fig. \ref{R53} (b), the beamforming capability of traditional methods is significantly reduced due to the increased number of antennas. The sensitivity of the superdirective antenna array also increases, making it difficult for traditional methods that do not consider field coupling to achieve good directional performance with more antennas. In Fig. \ref{R53} (c), the proposed method still  matches the simulation results, although this consistency is weaker than with four antennas. This reduced consistency may be due to the increased number of antennas, which raises the higher precision requirements for the superdirective excitation coefficients. The quantization by the control board has also diminished some performance

By testing our fabricated prototype system of superdirective arrays, we have convincingly demonstrated the superior performance of our proposed method. Moreover, these tests confirm that our approach achieves superdirectivity in practical applications, aligning well with the simulation results.

\section{Conclusion}\label{V}
In this paper, we addressed the problem of beamforming coefficient calculation for superdirective antenna arrays. A double coupling-based scheme was proposed, where the antenna coupling effects are decomposed to impedance coupling and field coupling, which is obtained by a proposed full-wave simulation-based method. We proved the uniqueness of the obtained solution, which is able to fully characterize the distorted coupling field. 
Such a method is applicable in both electromagnetic simulation and practice (after the proposed adaptations).  Moreover, our methodology can be scaled up with more antennas and applied to other topologies, such as 16-elements array, 32-elements array, and planar arrays. 
We also developed a prototype of the superdirective dipole antenna array working at 1600 MHz and a hardware experimental platform to validate the proposed methods. The simulation results and experimental measurements both showed that our proposed method outperforms the state-of-the-art methods. 

\section*{Acknowledgement}The authors thank Thomas L. Marzetta for his helpful comments.

\appendices
\section{Proof of Theorem 1}\label{Appen. A}

The electric field radiated by the $z$-diretcted dipole antenna located at the origin can be expressed as\cite{balanis2015antenna}
\begin{align}\label{dipole_EF}
    \mathbf{e(r)} = j \eta \frac{k I_0 l e^{-j k r}}{4 \pi r} \sin \theta\hat{\theta},
\end{align}
where $\mathbf{r}=[r\sin\theta\cos\phi,r\sin\theta\sin\phi,r\cos\theta]$ is the far-field coordinate, $r=\|\mathbf{r}\|_2$ is the radius distance, $\hat{\theta}$ is the unit vector of $\theta$-direction. 
Since $r\gg d$, we have $r-(m-1)d\sin\theta\sin\phi\approx r$. Thus, Eq. \eqref{dipole_EF} can be rewritten as
\begin{align}\label{36}
    \mathbf{e}(\mathbf{r}_m)= 
    j \eta  \frac{kI_0le^{-j k (r-(m-1)d\sin\theta\sin\phi)}}{4\pi r} \sin \theta \hat{\theta}.
\end{align}
Letting $\mathbf{E}$ be the function group of the electric field radiated by antennas located at different positions, i.e., 
\begin{align}\label{37}
\mathbf{E}=\{\mathbf{e}(\mathbf{r}_1),\,\mathbf{e}(\mathbf{r}_2),\,\cdots,\,\mathbf{e}(\mathbf{r}_M)\}. \end{align}
We will prove that $\mathbf{E}$ is a group of linearly independent functions. It is notable that $\mathbf{e}(\mathbf{r}_m)$ is a function with two variables $\theta$ and $\phi$. The equation \eqref{37} can be rewritten as 
\begin{align}
    \mathbf{e}(\mathbf{r}_m)=Ae^{j k (m-1)d\sin\phi\sin\theta}\sin\theta\hat{\theta}
\end{align}
and
\begin{align}
    A = j \eta  \frac{kI_0l e^{-j k r}}{4\pi r}  .
\end{align}
Assuming that there exists a series of constants $k_i,i=1,\cdots,M$ such that 
\begin{align}\label{41}
    k_1\mathbf{e}(\mathbf{r}_1)+k_2\mathbf{e}(\mathbf{r}_2)+\cdots+k_M\mathbf{e}(\mathbf{r}_M)=0.
\end{align}
We want to show that the coefficients $k_i,i=1,\cdots,M$ are all zero. Differentiating the equation \eqref{41} with respect to $\sin\phi$, we obtain
\vspace{-0.1cm}
\begin{align}\label{42}
    k_2(jkd\sin\theta)\mathbf{e}(\mathbf{r}_2)+\cdots+k_M(jk(M-1)d\sin\theta)\mathbf{e}(\mathbf{r}_M)=0.
\end{align}
Continuing differentiating the above equation $(M-2)$ times with respect to $\sin\phi$ yields
\begin{align}
    k_M(jk(M-1)d\sin\theta)^{M-1}\mathbf{e}(\mathbf{r}_M) = 0,
\end{align}
which implicates that $k_M=0$. By backtracking the process we can obtain that $k_i=0,i=1,\cdots,M$. Thus, $\mathbf{E}$ is a group of linearly independent functions. Without loss of generality, take the calculation of the first column   of the coupling matrix as an example.  The coupling field is radiated by these antennas together, i.e. $\mathbf{e}_{{c}_1}\in {\rm span}\{\mathbf{e}_{{s}_n}:n\in[1,M]\} $. Since ${\rm rank}(\mathbf{E}_s)={\rm rank}(\mathbf{E}_s|\mathbf{e}_{{c}_1}) $, the linear function
\begin{align}
    \mathbf{e}_{{c}_1}=c_{11}\mathbf{e}_{{s}_1}+c_{21}\mathbf{e}_{{s}_2}+\cdots+c_{M1}\mathbf{e}_{{s}_M}
\end{align}
has a unique solution\cite{greub2012linear}. 

 Since $\mathbf{e}_{c_n}$ is uniquely determined by the field coupling matrix $\mathbf{C}$ and the electric field without distortion, the radiation pattern distortion can thus be fully characterized by $\mathbf{C}$. $\hfill\qedsymbol$

\ifCLASSOPTIONcaptionsoff
  \newpage
\fi


\bibliographystyle{IEEEtran}

    \bibliography{mybib}

%








\end{document}